%% file: arxiv/main.tex
\newif\ifarxiv
\arxivtrue 

\documentclass[11pt,letterpaper]{article}
\input{arxiv/macros}

\ifarxiv

\author{
 {Yossi Azar\thanks{Department of Computer Science, Tel-Aviv University,
    Tel-Aviv, Israel. Email: {\tt azar@tauex.tau.ac.il}. Supported in part by the Israel Science Foundation (grant No. 1728/26). }}
 \and
 {Debmalya Panigrahi\thanks{Department of Computer Science, Duke University, Durham, NC. Email: {\tt debmalya@cs.duke.edu}. Supported in part by NSF grants CCF-2329230 and CCF-1955703.}}
 \and
 {Or Vardi\thanks{Department of Computer Science, Tel-Aviv University,
    Tel-Aviv, Israel. Email: {\tt orvardi@mail.tau.ac.il}.}} 
}

\else

\author{}

\fi

\date{}

\title{Online Metric TSP: Beyond the $\sqrt{n}$ Barrier}

\begin{document}

\maketitle

\begin{abstract}
We study an online variant of the Traveling Salesperson Problem (TSP) in which $n$ points arrive sequentially and must be inserted into an evolving tour. In the classical setting where arbitrary insertions are allowed, an $O(\log n)$-competitive algorithm has been known since the 1970s (Rosenkrantz, Stearns and Lewis 1977, Imase and Waxman 1991). Recently, Abrahamsen, Bercea, Beretta, Klausen, and Kozma [ESA 2024] introduced online metric TSP, a stricter model in which each arriving point must be assigned to a distinct cell of an array of size $m \ge n$, with the final tour order induced by the non-empty cells; the parameter $m$ captures the space usage of the algorithm.

When $m = 2^{n}$, this model recovers arbitrary insertions and therefore admits an $O(\log n)$-competitive algorithm. In contrast, when $m = n$, i.e., when each point’s position is fixed on arrival, Bertram~\cite{bertram2025onlinemetrictsp} recently showed that the competitive ratio is $\Theta(\sqrt{n})$. We investigate the tradeoff between space usage and competitiveness between these extremes. We note that this tradeoff was previously explored by the authors in \cite{AzarPV26} for the online sorting problem, which is the special case of online metric TSP on a line metric.

Our main result is a deterministic online metric TSP algorithm using $m = (1+\varepsilon) n$ space that achieves a competitive ratio of $O(\log^{3} n / \varepsilon)$, for any $\varepsilon  \le 1$. In particular, increasing the space from $n$ to $2n$ improves the competitive ratio from $\Theta(\sqrt{n})$ to $O(\log^{3} n)$. We complement this with a lower bound showing that for $m = n^{1+\varepsilon}$, any deterministic algorithm has a competitive ratio $\Omega(1/\varepsilon)$, for all $\varepsilon \ge \Omega(\log \log n / \log n)$. Consequently, even with $m = O(n \cdot \mathrm{polylog}(n))$, deterministic algorithms cannot achieve a constant competitive ratio.
\end{abstract}

\pagenumbering{gobble}
\clearpage
\pagenumbering{arabic}
\setcounter{page}{1}

\newcommand{\mbc}{\textsc{MBC}\xspace}

\section{Introduction}\label{sec:intro}\input{arxiv/intro}

\section{Online Metric TSP using Metric Ball-Cover (\mbc) Trees
}\label{sec:elementary}\input{arxiv/elementary}

\section{Lower Bound for Online Sorting and Online Metric TSP}\label{sec:lower}\input{arxiv/small-error-lower-bound}

 \section{Closing Remarks}\label{sec:closing}
 \input{arxiv/closing-remarks}
\
{\small
\bibliography{arxiv/bib}
}

\appendix

\section{The general case for unknown $\opt$}\label{sec:doubling}\input{arxiv/general-metric-doubling}

\section{A Counter-Example to the Analysis from \cite{AzarPV26} for a General Metric Space}\label{sec:counterexample}\input{arxiv/counterexample}

\end{document}

%% file: arxiv/macros.tex
\usepackage{typearea}
\usepackage{setspace}

\usepackage{fullpage}

\usepackage{comment} 
\usepackage{bbm}
\usepackage{framed} 
\usepackage{url} 
\usepackage{complexity}
\usepackage{booktabs}
\usepackage{amsmath,amssymb}
\usepackage{float}

\usepackage{amsthm}
\usepackage{thmtools} 
\usepackage{thm-restate}
\usepackage{nicefrac}
\usepackage{calc}
\usepackage{enumerate}
\usepackage{enumitem}
\usepackage[usenames,dvipsnames]{xcolor}

\usepackage[ruled,vlined,linesnumbered]{algorithm2e}
\usepackage{forloop}

\usepackage{graphicx}
\usepackage[font=footnotesize,labelfont=bf]{subcaption}
\usepackage[font=footnotesize,labelfont=bf]{caption}
\usepackage[nobreak=true]{mdframed}
\usepackage{appendix}
\usepackage[noend]{algpseudocode}
\usepackage[colorinlistoftodos]{todonotes}
\usepackage{xr}
\usepackage{array}
\usepackage{xspace}
\definecolor{ForestGreen}{rgb}{0.1333,0.5451,0.1333}
\definecolor{DarkRed}{rgb}{0.8,0,0}
\definecolor{Red}{rgb}{1,0,0}
\usepackage[linktocpage=true,
pagebackref=true,colorlinks,
linkcolor=DarkRed,citecolor=ForestGreen,
bookmarks,bookmarksopen,bookmarksnumbered]{hyperref}

\usepackage[capitalise]{cleveref}
\crefrangelabelformat{enumi}{#3#1#4--#5#2#6}
\usepackage{parskip}
\usepackage{chngcntr}
\usepackage{mathtools,stackengine}
\usepackage{multirow}

\stackMath
\newcommand{\stackGeq}[1]{%
	\setbox0=\hbox{${}\mathrel{\stackon[-1pt]{\geq}{\scriptstyle\text{#1\strut}}}{}$}
	\xdef\tmpwd{\dimexpr\the\wd0\relax}
	\kern.5\tmpwd\mathclap{\box0}&\kern.5\tmpwd
}

\usepackage{nameref}

\usepackage{tikz}
\usetikzlibrary{patterns}

\allowdisplaybreaks

\let\polylog\relax
\DeclareMathOperator*{\polylog}{polylog}

\newcommand{\eat}[1]{}

\newcommand{\ins}{{\tt insert}}

\newcommand\eps{\varepsilon}

\DeclarePairedDelimiterX{\expectarg}[1]{[}{]}{%
	\ifnum\currentgrouptype=16 \else\begingroup\fi
	\activatebar#1
	\ifnum\currentgrouptype=16 \else\endgroup\fi
}

\DeclarePairedDelimiterX{\nicesetarg}[1]{\{}{\}}{%
	\ifnum\currentgrouptype=16 \else\begingroup\fi
	\activatebar#1
	\ifnum\currentgrouptype=16 \else\endgroup\fi
}

\newcommand{\innermid}{\nonscript\;\delimsize\vert\nonscript\;}
\newcommand{\activatebar}{%
	\begingroup\lccode`\~=`\|
	\lowercase{\endgroup\let~}\innermid 
	\mathcode`|=\string"8000
}

\newcommand\opt{\mathsf{opt}\xspace}

\counterwithin{equation}{section}

\usepackage{eqparbox}

\theoremstyle{plain}

\newtheorem{theorem}{Theorem}[section]

\newtheorem{lemma}[theorem]{Lemma}

\newtheorem{claim}[theorem]{Claim}

\newtheorem{remark}[theorem]{Remark}

\newlength{\continueindent}
\usepackage{etoolbox}
\makeatletter
\newcommand*{\ALG@customparshape}{\parshape 2 \leftmargin \linewidth \dimexpr\ALG@tlm+\continueindent\relax \dimexpr\linewidth+\leftmargin-\ALG@tlm-\continueindent\relax}
\apptocmd{\ALG@beginblock}{\ALG@customparshape}{}{\errmessage{failed to patch}}
\makeatother

\makeatletter
\def\thm@space@setup{%
	\thm@preskip=\parskip \thm@postskip=0pt
}
\makeatother

\usepackage{etoolbox}
\usepackage{tikz}
\usetikzlibrary{tikzmark}
\usetikzlibrary{calc}

\errorcontextlines\maxdimen

\newcommand{\ALGtikzmarkcolor}{black}
\newcommand{\ALGtikzmarkextraindent}{4pt}
\newcommand{\ALGtikzmarkverticaloffsetstart}{-.5ex}
\newcommand{\ALGtikzmarkverticaloffsetend}{-.5ex}
\makeatletter
\newcounter{ALG@tikzmark@tempcnta}

\newcommand\ALG@tikzmark@start{%
	\global\let\ALG@tikzmark@last\ALG@tikzmark@starttext%
	\expandafter\edef\csname ALG@tikzmark@\theALG@nested\endcsname{\theALG@tikzmark@tempcnta}%
	\tikzmark{ALG@tikzmark@start@\csname ALG@tikzmark@\theALG@nested\endcsname}%
	\addtocounter{ALG@tikzmark@tempcnta}{1}%
}

\def\ALG@tikzmark@starttext{start}
\newcommand\ALG@tikzmark@end{%
	\ifx\ALG@tikzmark@last\ALG@tikzmark@starttext
	\else
	\tikzmark{ALG@tikzmark@end@\csname ALG@tikzmark@\theALG@nested\endcsname}%
	\tikz[overlay,remember picture] \draw[\ALGtikzmarkcolor] let \p{S}=($(pic cs:ALG@tikzmark@start@\csname ALG@tikzmark@\theALG@nested\endcsname)+(\ALGtikzmarkextraindent,\ALGtikzmarkverticaloffsetstart)$), \p{E}=($(pic cs:ALG@tikzmark@end@\csname ALG@tikzmark@\theALG@nested\endcsname)+(\ALGtikzmarkextraindent,\ALGtikzmarkverticaloffsetend)$) in (\x{S},\y{S})--(\x{S},\y{E});%
	\fi
	\gdef\ALG@tikzmark@last{end}%
}

\apptocmd{\ALG@beginblock}{\ALG@tikzmark@start}{}{\errmessage{failed to patch}}
\pretocmd{\ALG@endblock}{\ALG@tikzmark@end}{}{\errmessage{failed to patch}}
\makeatother

\algblock[with]{With}{EndWith}
\algblockdefx[With]{With}{EndWith}%
[1]{\textbf{with} #1 \textbf{do}}%
{}

\makeatletter
\ifthenelse{\equal{\ALG@noend}{t}}%
{\algtext*{EndWith}}
{}%
\makeatother

%% file: arxiv/intro.tex
The (metric) traveling salesman problem (TSP) is a classic problem in combinatorial optimization where the goal is to find the minimum-length tour covering a set of $n$ points in a metric space. In a classic result in approximation algorithms, Christofides~\cite{Christofides76} (and independently, Serdyukov~\cite{Serdyukov76}) gave a $3/2$-approximation to this problem, which remained the state-of-the-art for almost 50 years before being eventually improved to $3/2-\eps$ (for a small $\eps > 0$) in a remarkable recent result of Karlin, Klein, and Oveis Gharan~\cite{KarlinKO24}. On the hardness side, Papadimitrou and Yannakakis~\cite{PapadimitriouY93} showed APX-hardness for this problem as a consequence of the PCP theorem~\cite{AroraLMSS98}; the current record on the lower bound stands at $123/122$~\cite{KarpinskiLS15}. Resolving this gap remains a central open question in approximation algorithms. 

What if the set of points that we need to connect is not known in advance but arrives online? In each online step, the goal is to insert the newly arriving point in the existing order in a way that minimizes the length of the tour following that order. In the classical version of this problem, where the arriving point can be inserted anywhere in the existing order, $O(\log n)$-competitive algorithms are known since the 1970s~\cite{RosenkrantzSL77,ImaseW91}. Recently, Abrahamsen, Bercea, Beretta, Klausen, and Kozma~\cite{AbrahamsenB0K024} introduced a stricter model called online metric TSP: here, the online algorithm has to map each arriving point to a distinct cell of an array of size $m \ge n$, and the eventual order is given by the sequence of non-empty cells in the array. (We refer to the parameter $m$ as the space usage of the algorithm.) If $m = 2^n$, this setting recovers arbitrary insertions thereby admitting a competitive ratio of $O(\log n)$. At the other extreme, when $m=n$, the online algorithm commits to an irrevocable position of the new point in the eventual order immediately on arrival. For this setting, Bertram~\cite{bertram2025onlinemetrictsp} recently gave a deterministic algorithm with a competitive ratio of $O(\sqrt{n})$, matching previously known lower bounds even allowing randomization~\cite{AamandA0K23,AbrahamsenB0K024}. The exponential gap in the competitive ratio between the two extremes raises a natural question about what lies in between, namely: {\em how does the competitive ratio of online metric TSP depend on the space usage $m$?}

We note that this tradeoff between competitive ratio and space usage was previously explored by the authors for the {\em online sorting} problem~\cite{AzarPV26} introduced by Aamand, Abrahamsen, Beretta, and Kleist~\cite{AamandA0K23}, which is the special case of online metric TSP on a line metric. We will describe later that the solution proposed by \cite{AzarPV26} for online sorting does not generalize naturally to online metric TSP; indeed, the algorithm given in this paper is a different, and arguably more natural, solution for the online sorting problem as well. Moreover, the lower bound in this paper is applicable to the online sorting problem as well. Therefore, in addition to studying online metric TSP with $m > n$, this paper makes contributions to the online sorting problem for the case of $m > n$ as well.

\subsection{Our Results}

In this paper, we answer this question by showing a poly-logarithmic competitive ratio for online metric TSP even when $m$ only mildly exceeds $n$. Note that this is an exponential improvement over the case $m=n$. Specifically, we show:

\begin{theorem}\label{thm:upper}
    There is a deterministic online metric TSP algorithm that uses $m = (1+\eps) n$ space and achieves a competitive ratio of  $O(\log^3 n / \eps)$, for any $\eps \in [O(\log n / n), 1]$.
\end{theorem}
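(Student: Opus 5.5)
We plan to prove \cref{thm:upper} by first assuming that $\opt$ (the optimal tour length on the final set of $n$ points), or a constant-factor estimate of it, is known in advance. We then remove this assumption with a standard guess-and-double scheme, which costs only a constant factor in space and in the ratio (as in \cref{sec:doubling}). With $\opt$ known, we build a hierarchical structure that we will call a metric ball-cover (\mbc) tree. Level $\ell$, for $\ell = 0,1,\dots,L=O(\log n)$, uses radius $r_\ell = \opt/2^{\ell}$. Every arriving point belongs to a ball at each level, and these balls are nested: a new center is opened at level $\ell$ only when the point is at distance more than $r_\ell$ from every existing level-$\ell$ center inside its parent ball. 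Since the open centers at level $\ell$ are pairwise more than $r_\ell$ apart, a packing argument bounds their number by $O(\opt/r_\ell)=O(2^{\ell})$. Below the level where $r_\ell \approx \opt/n$, points can be placed arbitrarily at cost $O(\opt/n)$ each, for $O(\opt)$ in total.

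The array is laid out recursively along the tree. Each node gets a contiguous interval of cells, and its children's intervals are placed in the order in which the children were created. The difficulty is that neither the number of children nor the number of points each child will receive is known in advance. We handle this with slack: the interval of a node at depth $\ell$ contains a reserve of free cells, a $\Theta(\eps/\log n)$ fraction of its size, spread so that the total overhead summed over the $L$ levels is at most $\eps n$. When a child runs out of space, it is given a new fragment, i.e.\ a fresh subinterval taken from the parent's reserve. The fragment rule is geometric: each new fragment is at least double the previous one, or a $(1+\eps/\log n)$-factor larger, so a child receiving $k$ points is split into only $O(\log n/\eps)$ fragments. Space feasibility then becomes a counting argument over the levels. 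This is where we expect the condition $\eps \ge \Omega(\log n/n)$ to be needed, so that every reserve contains at least one cell.

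For the cost, we charge each edge of the final tour between consecutive non-empty cells to the lowest common tree node of its two endpoints. An edge whose lowest common node is at level $\ell$ has length $O(r_\ell)$. We then count how many times the tour can switch between fragments of distinct children of a fixed level-$\ell$ node. A node has $O(\text{children})$ children, each split into $O(\log n/\eps)$ fragments, so the number of switches is $O(\#\text{children}\cdot \log n/\eps)$. Summing over all level-$\ell$ nodes gives $O(2^{\ell}\log n/\eps)$ switches, and the total length charged to level $\ell$ is $O(2^{\ell}\cdot r_\ell \cdot \log n/\eps)=O(\opt\log n/\eps)$. One factor of $\log n$ comes from the levels, one from the fragments, and one more from the reserve size $\eps/\log n$ entering the fragment count, which gives $O(\log^3 n/\eps)$.

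The main obstacle is reconciling these two budgets. Reserves must be small enough that the total space stays within $(1+\eps)n$, but large enough that fragmentation does not explode; and points may arrive adversarially, concentrated in late-opening children that find their parent's reserve exhausted. Our first approach is a charging argument in which an exhausted reserve at level $\ell$ is refilled from the ancestor's reserve. Every refill at least doubles a fragment, so the number of refills along each root-to-leaf path is $O(\log n)$ per level. We also have to check that the packing bound on the number of centers holds for the nested, parent-restricted balls; the counterexample in \cref{sec:counterexample} warns that the line-metric analysis of \cite{AzarPV26} fails here, so this step needs care.
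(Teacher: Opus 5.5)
Your proposal has two genuine gaps, and both sit where the difficulty of the problem lies. First, the packing bound does not follow from your opening rule. You open a level-$\ell$ center only when the point is more than $r_\ell$ from the existing level-$\ell$ centers \emph{inside its own parent ball}. So two level-$\ell$ centers in different, overlapping parent balls can be arbitrarily close; this is exactly the phenomenon of \Cref{sec:counterexample}. Hence ``pairwise more than $r_\ell$ apart'' is unjustified, and so is the $O(2^{\ell})$ node count on which your whole cost bound rests. You flag this but supply no mechanism. The paper's mechanism is its insertion rule: a point always goes to the \emph{globally} lowest partial admissible node over all trees. This yields \Cref{cl:small-space-new}: partial nodes of equal height, in any trees, have centers more than $r_h$ apart. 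For your net tree you would need an analogous global rule. For example, open centers only at levels where the point is uncovered by \emph{all} existing centers, and hang the new chain under the deepest covering center.

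Second, and more seriously, the space bound for the reserve/fragment allocator is asserted, not proved, and the natural readings of it fail. If each new fragment at least doubles, a bottom node that receives one point after its last doubling leaves about half its space empty. The adversary can then force $\Omega(n)$ unused cells rather than $\eps n$. If instead fragments grow by a $(1+\eps/\log n)$ factor, a node of size $k$ needs $\Theta((\log n/\eps)\log k)$ fragments, not $O(\log n/\eps)$. More fundamentally, a child's fragment must be a \emph{contiguous} block inside a parent fragment. Whenever the free part of the parent's reserve is smaller than the next request, that free part is stranded. Requests scale with child sizes, and one child may carry most of its parent's points, so nothing in your scheme prevents a constant fraction of each parent fragment from being stranded. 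Since the levels are nested, these per-level losses multiply over $\Theta(\log n)$ levels. ``Refilling from the ancestor's reserve'' says neither where the contiguous blocks come from nor how much is stranded. The paper avoids this tradeoff altogether. Its \mbc trees are complete binary trees of fixed height $H=\lfloor\log(\eps n)-\log\log n\rfloor$, so nothing is ever resized. The only unused cells lie under unmarked children of partial nodes, and the separation property bounds these by $2^{H-1}$ per height and $H2^{H-1}\le\eps n$ in total (\Cref{lem:smallspace-unused-space-metric}). Each tree costs at most $4H\cdot\opt$ and there are at most $4\log n/\eps$ trees, giving $O(\log^2 n/\eps)$ for known $\opt$. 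The third logarithm comes from the nested doubling of \Cref{sec:doubling}. That scheme loses a $\log n$ factor because the number of points per epoch is unknown, not the constant factor your proposal attributes to it.
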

\noindent{\bf Remark:} \cref{thm:upper} was obtained concurrently and independently by Aamand et al. \cite{aamand2026onlinegeometricpackingonline}.

This theorem implies, for instance, that with $n/\polylog(n) = o(n)$ extra space, the competitive ratio of online metric TSP improves from $\Theta(\sqrt{n})$ to $\polylog(n)$. 
Prior to our work, the only cases for which a sub-polynomial competitive ratio was known had a space usage of $m = 2^n$. It is interesting to ask whether the extra space can be reduced even further to $n^{1-\delta}$ for any constant $\delta > 0$, or whether there is a lower bound on the extra space required to obtain a $\polylog(n)$ competitive ratio. We leave these are interesting open questions.

Can the competitive ratio be improved further to $O(1)$? To the best of our knowledge, such a result is not known even for unlimited space, but there is no lower bound either, even for $m = O(n \cdot log n)$. We complement our upper bound by showing that it is impossible for a deterministic algorithm to achieve a competitive ratio of $O(1)$, unless $m$ polynomially exceeds $n$. We show:

\begin{theorem}\label{thm:lower}
    For $m = n^{1+\eps}$, the competitive ratio of any deterministic online metric TSP algorithm is $\Omega(1/\eps)$, for any $\eps > \Omega(\log \log n/\log n)$.
\end{theorem}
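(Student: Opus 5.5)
The plan is to prove the bound for online sorting (points in $[0,1]$, the line metric), since that is a special case of online metric TSP. I fix a deterministic algorithm and build an adaptive adversary that runs in $k = \Theta(1/\eps)$ levels. The guiding tradeoff is this. At every level the algorithm must choose between two options. It can place the new points ``consistently'', in value order inside one run of free cells. That costs little in tour length but multiplies the free space available for later points by a factor of about $1/t$. Or it can place them inconsistently, which adds $\Omega(\opt)$ to the tour. With $t \approx n^{2\eps}$, the algorithm cannot afford the consistent option at all $k$ levels, because $t^k > m = n^{1+\eps}$. So it must pay $\Omega(\opt)$ at a constant fraction of the levels, and the ratio is $\Omega(k)=\Omega(1/\eps)$.

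\textbf{Invariant and one level.} Before level $j$ the adversary maintains three things. First, an interval $I_j$ of values. Second, two already-placed ``guard'' points whose values lie just outside $I_j$ on either side. Third, the array segment $G_j$ of cells strictly between the two guards' cells, with $|G_j| \le m/t^{j}$. At level $j$ the adversary inserts $t$ points evenly spaced across $I_j$. Consider the $t+1$ gaps in $G_j$ that separate consecutive new points in value order. If the new points appear in sorted order inside $G_j$, these gaps are disjoint, so by averaging some gap has at most $|G_j|/t$ cells. The adversary takes that gap as $G_{j+1}$ and the value interval between the corresponding two points as $I_{j+1}$. This restores the invariant. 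If instead the algorithm places some new point outside $G_j$, or out of order, then the tour must cross a guard and come back. That costs the tour at least a constant multiple of the spread of the values involved.

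\textbf{Making the charge scale-free.} The main obstacle is the following. If $I_{j+1}$ were simply a tiny subinterval of $I_j$, a violation at a deep level would cost only $|I_j|$. That is negligible compared to $\opt=\Theta(1)$. To fix this, I would not charge violations against the nested interval alone. The level-$j$ points will instead be chosen to come in widely separated clusters, in the spirit of the $\sqrt n$ lower bounds. Each level then contributes $\Theta(1)$ both to $\opt$ and to the cost of any tour that separates the clusters incorrectly. Alternatively, I would normalize by arranging the levels so that $\opt \le \sum_j |I_j| = O(1)$, while each violation at level $j$ costs a distance comparable to the full spread of that level's guards. The concrete first attempt is this: every level's points span the entire unit interval, and only the ordering constraint is inherited recursively through the array gap. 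A violation at any level then forces $\Omega(1)$ extra length while $\opt = O(1)$. I expect this accounting to be the hard part. One must check that a cheap violation cannot buy back a large amount of array space.

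\textbf{Finishing.} Once the dichotomy holds, the count is straightforward. The algorithm can afford the consistent option at most $\log_t m < k$ times, since after that many levels $|G_j|<1$ and no free cell remains. Hence at least $k - \log_t m = \Omega(1/\eps)$ levels each contribute $\Omega(1)$ extra tour length. The parameter choice must satisfy $kt \le n$ (the total number of points), which holds when $n^{2\eps}/\eps \le n$. It must also make $t$ at least a large constant, so that the averaging step yields a genuine shrink factor; this holds when $n^{2\eps}$ is large. These conditions are where the assumption $\eps = \Omega(\log\log n/\log n)$ enters.
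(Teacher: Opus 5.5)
There is a genuine gap, and it is the step you flag as the hard part. In your nested scheme, the only cost you can certify for a violation at level $j$ is of order $|I_j|$, and these charges sum to $O(\opt)$, so the dichotomy yields nothing. No repair inside your framework can work, because every variant keeps all values in $[0,1]$ with $\opt=\Theta(1)$. An algorithm told in advance that values lie in $[0,1]$ is in the known-$\opt$ setting. There, as the paper notes, online sorting admits $O(1)$-competitive algorithms with $O(n\log^2 n)$ space \cite{AzarPV26}. In the regime of the theorem (the paper uses $\eps=1/(2c)\ge 3\log\log n/\log n$) we have $n^{1+\eps}\ge n\log^3 n$. So a fixed-scale adversary that forces $\Omega(1)$ extra length at $\omega(1)$ levels would contradict that result. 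The variant where every level spans all of $[0,1]$ is also incoherent: nothing then forces a level's points into a gap $G_j$ between two guards of nearby value.

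There is also a quantitative problem. Your space pressure runs down a single shrinking gap, so the instance has only about $kt=n^{2\eps}/\eps$ points. At $\eps=\Theta(\log\log n/\log n)$ this is polylogarithmic in $n$. By \Cref{thm:upper} with $\eps=1$ (space $2n'$, $\opt$ unknown), no instance with $n'$ points can force a ratio above $O(\log^3 n')=O((\log\log n)^3)$. That is far below $\log n/\log\log n$.

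The missing idea is to exploit that $\opt$ is \emph{unknown} by escalating the scale, and to exhaust space with $\Omega(n)$ holes spread across the array rather than with one nested gap. The paper runs epochs whose elements lie in $[2^{2i-1},2^{2i}]$, so earlier elements act as zeros. A mixed hole (old element at one end, new at the other) then costs at least $2^{2i-2}\ge\opt/4$. Hence the algorithm can touch at most $c$ old holes per epoch before the ratio reaches $c/4$, and large holes opened in earlier epochs remain as wasted space.

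Within an epoch, phase $j$ completes a set of $n^{j/c}+1$ equally spaced elements. Suppose fewer than $n^{j/c}/2$ large new holes (each of at least $n^{1/(2c)}$ cells) are present. Then at least $n^{j/c}/4$ freshly created small holes each cost at least $2^{2i-1}/n^{j/c}$, which totals at least $\opt/8$ per phase. After $c/2$ such phases the ratio is $c/16$. If instead the algorithm keeps opening large holes, they accumulate to at least $(n-\sqrt n)/4$ of them, which requires more than $n^{1+1/(2c)}/5$ cells.
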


This implies that using even $m = O(n\cdot \polylog(n))$, the best competitive ratio one can potentially achieve deterministically is $O(\log n / \log\log n)$. 

\noindent
{\bf Remark:} We note that the lower bound construction that we will give later to prove \Cref{thm:lower} is actually for the online sorting problem, which is a special case of onine metric TSP. In \cite{AzarPV26}, the authors showed that using $m = O(n \log^2 n)$, one can achieve a competitive ratio of $O(1)$ for the online sorting problem. However, this result does not contradict the lower bound in this paper since the previous result assumed that the algorithm knows the value of $\opt$ from the outset, while the lower bound is for the case where the value of $\opt$ is unknown to the algorithm. It remains open whether an $O(1)$-competitive algorithm exists for the online metric TSP problem for the case of known $\opt$.

\subsection{Related Work} 

Online versions of TSP have been studied since the 1970s when Rosenkrantz, Stearns, and Lewis~\cite{RosenkrantzSL77} introduced the class of constructive insertion heuristics to iteratively build a TSP tour by inserting new vertices while preserving the relative order of previous vertices. They gave a $2$-approximate algorithm when the insertion order is chosen by the algorithm, and an $O(\log n)$-competitive one when it is chosen by an adversary. The latter result also follows from later work by Imase and Waxman on the online Steiner tree problem~\cite{ImaseW91}, where the online algorithm additionally commits to the edges being used to connect the new vertex to the previous vertices. This latter commitment also creates a difference between the two problems: while $O(\log n)$-competitiveness is tight for online Steiner tree, it is not known to be tight for creating an online TSP tour. For the latter bound, the only lower bounds known are for specific insertion strategies, such as for the greedy strategy due to Azar~\cite{Azar94}. (See also \cite{MegowSVW16} for a recourse version of online MST and TSP where the algorithm commits to edges but these can be partially revoked using a limited budget.)

Later, a different online version of TSP was proposed, where a server has to serve a sequence of requests arriving online on a metric space and the goal is to minimize the makespan, i.e., the total time to serve all requests~\cite{AusielloFLST01}. This problem differs from prior work (and from online metric TSP) in that there is a real notion of time with release dates for the server requests. We note that several results have been obtained in this line of work, both for general metric spaces~\cite{FeuersteinS01,BlomKPS01,KrumkeLLMPPS02,KrumkePPS03,Lipmann03,JailletW08} and also specifically for the line metric~\cite{BjeldeHDHLMSSS21}.

Online metric TSP was introduced by Abrahamsen, Bercea, Beretta, Klausen, and Kozma~\cite{AbrahamsenB0K024} as a generalization of the online sorting problem~\cite{AamandA0K23} which restricts the points to a line. It is known that online sorting has a tight competitive ratio of $\Theta(\sqrt{n})$ when $m=n$~\cite{AamandA0K23,AbrahamsenB0K024}, and that this bound continues to hold for online metric TSP with $m=n$~\cite{bertram2025onlinemetrictsp}. For $m > n$, an $O(1)$-competitive algorithm is known for $m = O(n \cdot \polylog(n))$ when $\opt$ is known, and an $O(\log n)$-competitive algorithm is known for $m = (1+\eps) n$ (for $\eps > 0$) when $\opt$ is unknown~\cite{AzarPV26}. Note that the lower bound result in the current paper also applies to online sorting; hence, it shows that this distinction between known and unknown $\opt$ is necessary. Further results on the online sorting problem appear in \cite{NirjhorW25,improved-stochastic-hu,FotakisKPT25}.

\subsection{Our Techniques} 

We now give an overview of our main techniques. We first describe our upper bound (\Cref{thm:upper}) and then outline the lower bound construction (\Cref{thm:lower}). 

\smallskip\noindent
{\bf Upper Bound.}
Our starting point is the previous result for the online sorting problem with $m > n$~\cite{AzarPV26}. This paper gave an $O(\log^3 n/\eps)$-competitive algorithm using $m=(1+\eps)n$ space, based on a novel data structure called an \emph{elementary tree}. The core idea is to maintain a collection of binary trees whose leaves, ordered from left to right, correspond to array cells, and which guide the insertion of arriving elements.

We briefly recall how elementary trees are used to insert elements. The algorithm dynamically and sequentially labels internal tree nodes with dyadic intervals, whose lengths correspond to the heights of the nodes. For example, if the element range is $[0,1]$, the root is labeled $[0,1]$, but its two children may both be labeled $[0,1/2]$, or $[1/2,1]$, or one of each. This continues down the tree with the interval length halving at each level. The label of a node $v$ specifies the range of element values that may occupy the array cells corresponding to the leaves of the subtree rooted at $v$. Crucially, except at the root, these labels are assigned dynamically and depend on the insertion sequence.

When a new element arrives, the algorithm performs a depth-first search to find a node whose interval label contains the element and which has an unlabeled child. That child is then labeled with the unique dyadic interval at that level containing the element, and the procedure recurses down the tree until the element is placed at a leaf. This approach fundamentally exploits the shared linear order of the line metric and the array, a property that does not extend to general metric spaces.

Extending this framework to general metrics faces several obstacles. First, unlike dyadic intervals, recursive covers (by balls of decreasing radii) of general metric spaces typically involve overlapping regions, which violates laminarity. (Note that laminarity requires that any two of the covering sets should either be disjoint or one should be contained in the other.) As a result, a point may not belong to a unique ball at a given level, introducing ambiguity that the insertion algorithm must resolve. Second, while dyadic intervals split into exactly two subintervals at each level, a ball in a general metric space may require many smaller balls to cover it at the next scale, with the number governed by the doubling dimension. Since elementary trees are binary, this creates a mismatch between the recursive structure of the metric space and the tree arity.\footnote{In fact, the doubling dimension can be arbitrarily large for a general metric space.} Increasing the arity is also  generally infeasible, as it would lead to excessive space usage. Finally, there is a more fundamental disconnect: while dyadic intervals inherit a natural left-to-right order from the line metric compatible with the array, there is no canonical linear ordering of balls of the same radius in a general metric space.

\smallskip\noindent{\bf Metric Ball-Cover (\mbc) Tree.}
To overcome these difficulties, we introduce a new data structure in this paper called a {\em metric ball-cover tree} (\mbc tree). Our first step is to construct a {\em dynamic} recursive cover of the metric space using balls of geometrically decreasing radii. Unlike dyadic intervals on the line that are statically defined, the balls in our cover are defined based on the arrival sequence. When a new point arrives, the algorithm identifies the highest level at which it is not already covered by an existing ball and creates balls of progressively smaller radii from that level onward, all centered at the new point.

Given this (dynamically evolving) ball-cover of the metric space, an \mbc tree is a binary tree where the nodes are {\em dynamically labeled} by points in the metric space. Labeling a node $v$ at height $h$ with a point $p$ in the metric space commits the array cells corresponding to the leaves under $v$ in the \mbc tree to points in a ball of radius $r$ centered at $p$, where $r$ is a fixed function of $h$. 

This brings us to our main challenge: we need to define the rules of labeling an \mbc tree culminating in the insertion of a new point in the array. A natural attempt would be to reuse the depth-first insertion procedure of~\cite{AzarPV26}, but as we show in \Cref{sec:counterexample}, this approach fails to preserve the desired guarantees in general metric spaces.
Instead, we introduce a new \emph{bottom-up} labeling strategy. To motivate it, consider the risk associated with labeling a node $v$ by a ball of radius $r$ centered at a newly arrived point $p$. Since labels are irrevocable, this commits the entire subtree under $v$ to points within that ball, even though only a single point has been observed so far. If no other points ever fall into this ball, the remaining array cells in that subtree are wasted.

This phenomenon is significantly more severe in general metric spaces than on the line. Suppose we call a subtree that receives only a single insertion a \emph{wasted} subtree. Note that each ball/interval can only correspond to a single wasted subtree, since any subsequent insertion in the ball/interval would not label a new node at the same level. For dyadic intervals on the line, this is a strong property: it immediately implies that the total wasted space is cumulatively no more than the size of the subarray under a single elementary tree, since each dyadic interval splits into exactly two subintervals at the next level. Indeed, by constructing elementary trees with $\eps n$ leaves, the algorithm of \cite{AzarPV26} ensures that the total wasted space is $\eps n$.

In contrast, in a general metric space, a ball may split into many smaller balls at the next level. Since the tree remains binary, this can cause the total wasted space to grow well beyond the size of a single tree. Increasing the tree arity to match the doubling dimension does not resolve this issue, as it merely coalesces the wasted space to a larger subarray that now corresponds to the leaves of an elementary tree of fixed height.

To overcome this bottleneck, we adopt a frugal allocation strategy in \mbc trees: we label the \emph{deepest} unlabeled node that can legitimately be assigned to the new point. This bottom-up approach minimizes the amount of space committed based on limited evidence. However, abandoning the depth-first insertion procedure in \cite{AzarPV26} necessitates a fundamentally new analysis of both space usage and competitive ratio of the algorithm. We describe the \mbc tree data structure and the online metric TSP algorithm in \Cref{sec:elementary}. For simplicity, we first assume that the optimal value $\opt$ is known to the algorithm. Under this assumption, we obtain an $O(\log^2 n/\eps)$-competitive algorithm using $m=(1+\eps)n$ space. Removing the assumption that $\opt$ is known incurs an additional $O(\log n)$ factor in the competitive ratio, following techniques similar to~\cite{AzarPV26}. This extension is presented in \Cref{sec:doubling}, completing the proof of \Cref{thm:upper}.

\smallskip\noindent{\bf Lower Bound.}
Our lower bound construction (\Cref{thm:lower}) applies to online sorting~\cite{AamandA0K23}, the special case of online metric TSP on the line. A key feature of the construction is that the optimal value $\opt$ is \emph{not} known to the online algorithm. While this assumption is standard in online metric TSP, some prior work on online sorting assumes that all elements lie in $[0,1]$ and that both endpoints appear in the instance, effectively fixing $\opt=1$. Our lower bound does not apply in this setting; indeed, when $\opt$ is known, $O(1)$-competitive algorithms with $m=O(n\polylog n)$ space are known for online sorting~\cite{AzarPV26}.

Recall that in online sorting, elements arrive online and must be placed into an array of size $m$ so as to minimize the sum of absolute differences between consecutive non-empty cells. The lower bound proceeds in multiple phases. Suppose in the first phase, the adversary presents batches of elements that are evenly spaced within the interval $[1,2]$. The algorithm’s behavior can be summarized by the structure of the \emph{holes}, i.e., the empty regions between occupied array cells. If the algorithm keeps these holes small, it is being frugal with space, but the adversary responds by increasing the batch sizes while keeping the value range fixed. Maintaining small holes under such dense arrivals forces the algorithm to incur a large competitive ratio, as it repeatedly pays cost at the boundaries between batches.

Eventually, to control its competitive ratio, the algorithm must create sufficiently large holes. At this point, the adversary switches strategy and begins presenting elements in a much larger range $[k,2k]$, for large $k$. This effectively turns the earlier elements into zeros relative to the new scale. Since the algorithm lacks sufficient unused space, it must reuse the previously created large holes. Inserting large values into these holes immediately incurs cost $\Theta(k)$ at the boundaries, which matches the order of $\opt$ at this stage. If this happens in too many holes, the algorithm again suffers a large competitive ratio.

Our construction, given in full in \Cref{sec:lower}, iterates this process over multiple rounds and refines the idea presented above to obtain the lower bound stated in \Cref{thm:lower}.

%% file: arxiv/elementary.tex
We introduce a new data structure called a Metric Ball-Cover (\mbc) tree. In this section, we describe \mbc trees and show how they are used to solve online metric TSP. 

An \mbc tree of height $H$ is an array data structure of size $2^H$ that has an associated complete binary tree of height $H$. We emphasize that the tree is \emph{virtual} in the sense that it is used to define how points are inserted into the array, but the actual data structure is just an array. The leaves of the tree, at height $0$, are each associated with a unique array cell, enumerated from left to right. The non-leaf nodes of the tree do not correspond to array cells, and are only used in the algorithm to decide the location where an arriving point in inserted in the array. Thus, all insertions of points are at the leaves of the tree. The root of the tree is at height $H$, its children are at height $H-1$, their children at height $H-2$, and so on.

\smallskip\noindent{\bf Node Labeling.} 
Each node $v$ at height $h$ is associated with a ball $B(c_v, r_h)$. 
\begin{itemize}
    \item[-] \textbf{Radius ($r_h$):} The radius is deterministic and depends only on the height. We set
        $r_h := R \cdot 2^{h-H}$,
    where $R$ is a scaling factor denoting the radius of the ball at the root, which will be defined when the tree is created.
    \item[-] \textbf{Center ($c_v$):} The center is determined dynamically. We say a node in the tree is \emph{marked} if a center $c_v$ has been assigned to it. Otherwise, the node is \emph{unmarked}. Initially, all nodes are unmarked. A marked node $v$ is \emph{partial} if it has one marked child and one unmarked child. For a marked node $v$, we say that a point $x_t$ is \emph{admissible} at $v$ if $d(x_t, c_v) \le r_h$ where $h$ is the height of $v$.
\end{itemize}

An example of an \mbc tree is shown in \Cref{fig:mbc-tree}.

\smallskip\noindent{\bf The Data Structure.}
The overall data structure comprises a sequence of identical \mbc trees $A_0, A_1, \ldots$, each of fixed height 
    $H := \lfloor\log (\eps n) - \log \log n\rfloor$,\footnote{All logarithms are with base $2$ unless otherwise mentioned.}   
and with the roots labeled with balls of radius $R$. The number of \mbc trees is not fixed in advance; the sequence grows dynamically as needed.

\begin{remark}
    \label{rem:epsilon-cannot-be-too-small}
    We assume $\eps \ge \log n / n$, ensuring that the height $H$ of the \mbc trees is non-negative.
\end{remark}

\subsection{The Online Metric TSP Algorithm}
We assume knowledge of the value of the optimal solution $\opt$ in describing our algorithm in this section. In \Cref{sec:doubling}, we will relax this assumption at the cost of an additional $O(\log n)$ factor in the competitive ratio. We use $\opt$ to set the radius of the balls at the roots of the \mbc trees in our data structure, i.e., $R := \opt$ in our data structure.

We now describe the online algorithm that inserts a point $x_t$ into the array using the sequence of \mbc trees.
The algorithm searches for a \textbf{partial, admissible} node of the \textbf{lowest possible height} across all existing \mbc trees.
\begin{itemize}
    \item If such a node $v$ is found, we call $\ins(v, x_t)$, which is described below.
    \item If no such node exists, we create a new \mbc tree $A_{\text{new}}$, add it to the sequence, and call $\ins(\text{root}(A_{\text{new}}), x_t)$.
\end{itemize}

\smallskip\noindent{\bf Local Insertion ($\ins(v, x)$).}
The procedure starts by identifying a node $v'$ as follows:
\begin{itemize}
    \item If $v$ is a \emph{partial node}, $v'$ is the \emph{right child} of $v$.
    \item Otherwise, $v$ is the root of a new tree; in this case, $v' := v$.
\end{itemize}
Starting from $v'$, the algorithm traverses down to a leaf by repeatedly choosing the \emph{left} child. For every node $w$ on this path (including $v'$), we mark $w$ with center $c_w = x$ and radius $r_h$ where $h$ is the height of $w$. Finally, $x$ is stored at the leaf $u$ at the end of this path.

In \Cref{fig:insertion to mbc}, we show the insertion of a new point in an \mbc tree.

\begin{figure}[tbh]
\centering
\includegraphics[width=0.8\linewidth]{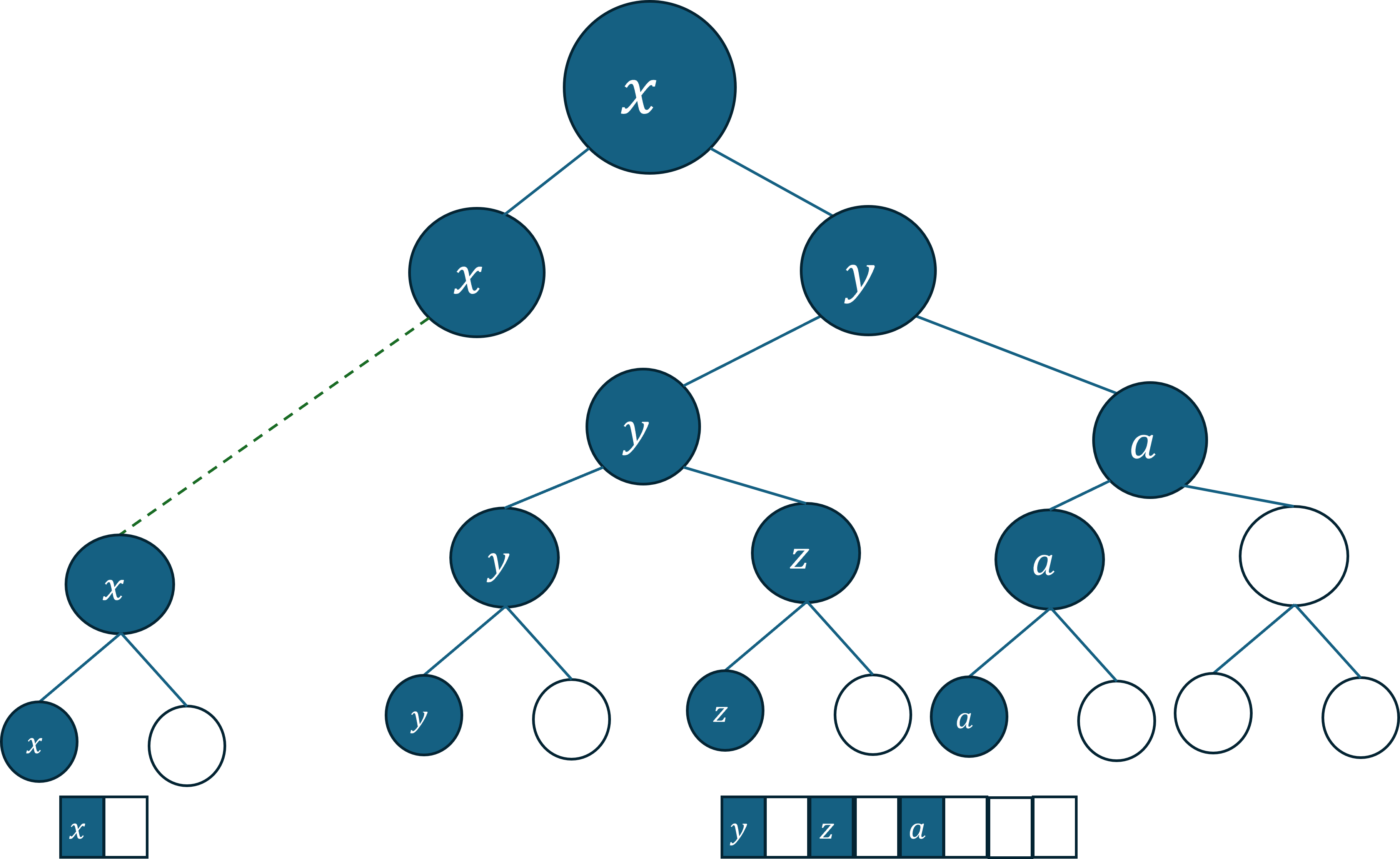}
    \caption{Illustration of an \mbc tree. The colored cells in the array represent occupied cells. Similarly, the colored nodes in the tree represent marked nodes.}
\label{fig:mbc-tree}    
\end{figure}

\begin{figure}[tbh]
    \centering
    \subfloat[\centering Before the insertion of $x_t$]{{\includegraphics[width=.49\linewidth]{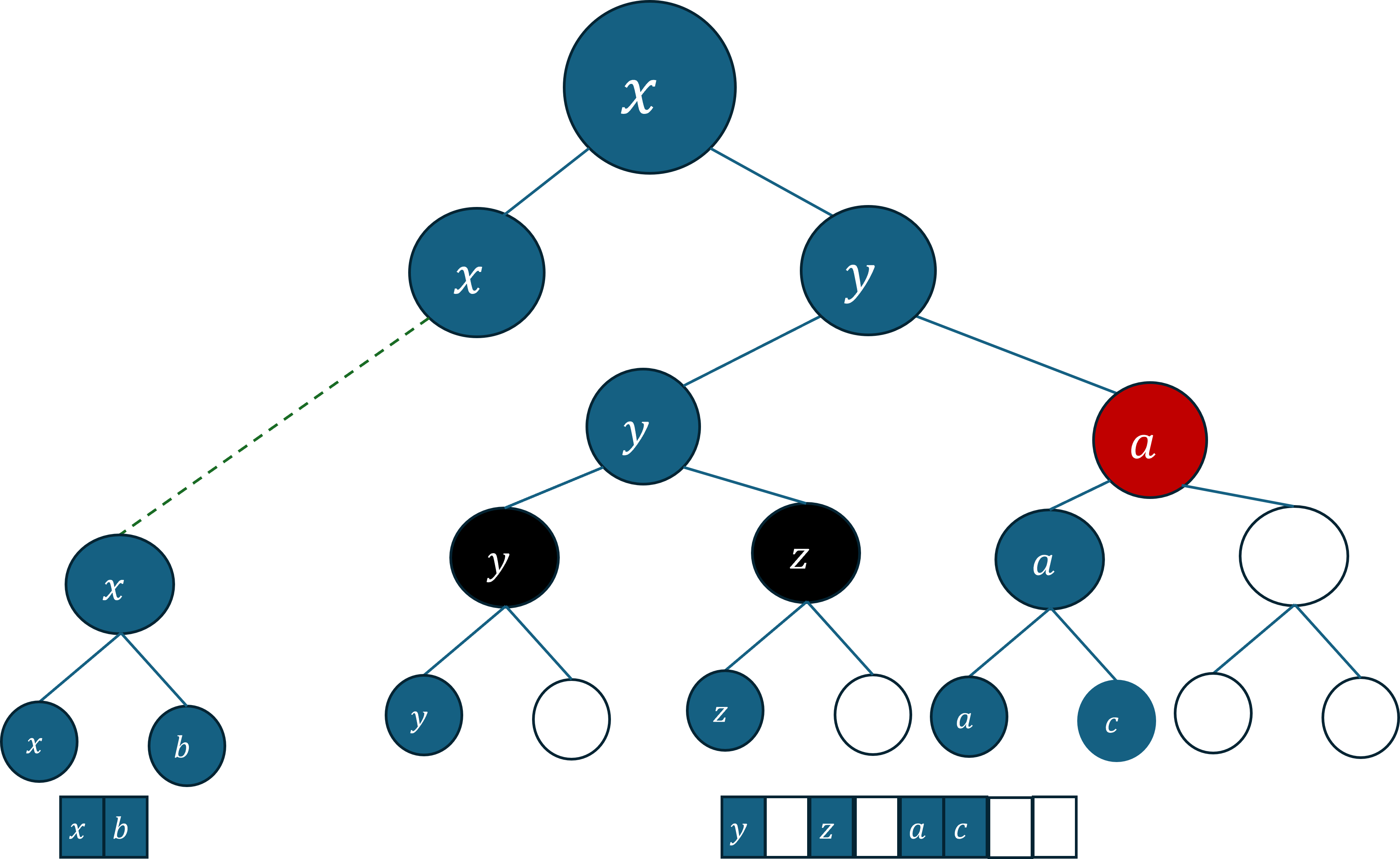} }}
    \hfill
    \subfloat[\centering After the insertion of $x_t$]{{\includegraphics[width=.49\linewidth]{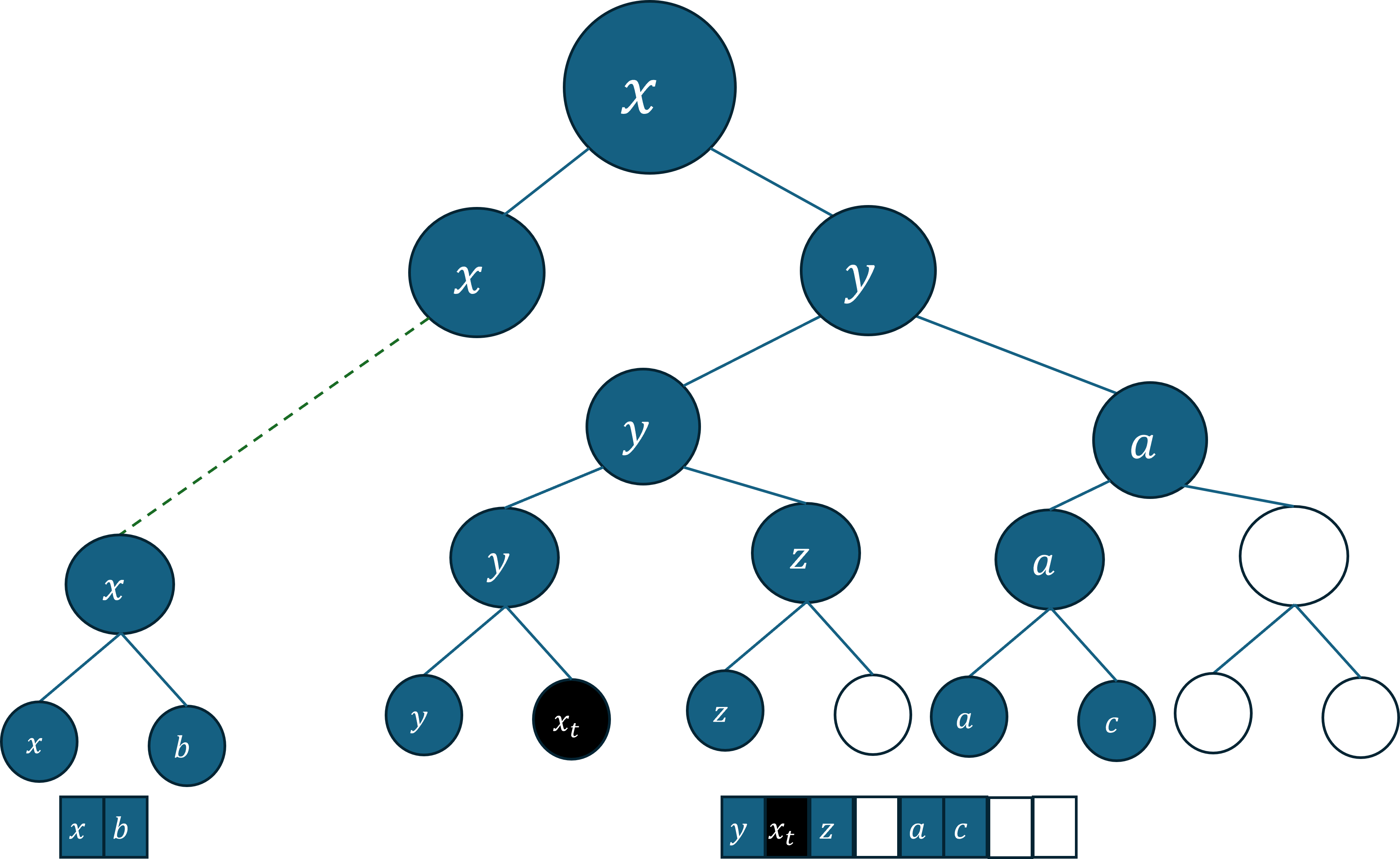} }}
    \caption{Insertion of a point $x_t$ in an \mbc tree. The filled parts of the array represent occupied regions. The colored tree nodes represent marked nodes. On the left figure, suppose the nodes colored black and red are the partial nodes whose balls contain the new point $x_t$. Since the red node is at a higher level than the black nodes, only the black nodes are candidate locations for inserting $x_t$. On the right figure, we show the \mbc tree and the array after inserting $x_t$.}
    \label{fig:insertion to mbc}
\end{figure}

\subsection{Space Usage of the Algorithm}
We start by showing that the size of the array used by the online metric TSP algorithm is at most $(1+\eps)n$. For this, we bound the number of unused cells in the array. Note that each partial node has an unmarked child with an unused subarray under it. Conversely, the unused space in the array can be partitioned into subarrays that are subtended by the unmarked children of partial nodes. Therefore, it suffices to bound the cumulative unused space under unmarked children of partial nodes in the data structure. 

Our main claim is the following: 
\begin{claim}
    \label{clm:allocation stops when the empty space is large}
   For all partial nodes at a fixed height $h$, the cumulative unused space under their unmarked children is at most $2^{H-1}$.
\end{claim}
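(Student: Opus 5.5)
The plan is to reduce the claim to a packing bound on the centers of partial nodes at height $h$, and then to use the fact that all these centers are input points visited by the optimal tour. Each unmarked child of a partial node at height $h$ is at height $h-1$, so it subtends exactly $2^{h-1}$ array cells. It therefore suffices to show that at any moment there are at most $2^{H-h}$ partial nodes at height $h$; the total unused space under their unmarked children is then at most $2^{H-h}\cdot 2^{h-1}=2^{H-1}$.

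\textbf{Step 1 (separation invariant).} I will show that at every time, any two partial nodes at height $h$ have centers at distance strictly greater than $r_h$. A node $w$ at height $h$ becomes partial only during a call $\ins(v,x_t)$. In that call, $w$ lies on the left-going path from $v'$, is marked with $c_w=x_t$, and has a marked left child and an unmarked right child. The root of a new tree is a special case: it is not partial immediately unless $H\ge 1$, and then it is partial with center $x_t$. In all cases, $w$ has height $h$ strictly below the height of $v$, or $w$ is the root of a newly created tree.
\begin{itemize}
\item If $v$ was a partial node found by the search, the search picks the \emph{lowest} partial admissible node. Hence $x_t$ was not admissible at any partial node of height $h<\mathrm{height}(v)$, i.e., $d(x_t,c_u)>r_h$ for every existing partial node $u$ at height $h$.
\item If a new tree was created, then no partial admissible node existed at all, so the same conclusion holds.
\end{itemize}
So every newly created partial node at height $h$ is more than $r_h$ away from all existing ones. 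Centers never change, and the set of partial nodes only shrinks otherwise: a partial node stops being partial once its right child is marked. Hence the invariant holds by induction over time.

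\textbf{Step 2 (packing via $\opt$).} Every center is an input point, so the optimal tour, of length $\opt=R$, visits all $k$ centers of the partial nodes at height $h$. If $k\ge 2$, the tour restricted to these points in cyclic order has $k$ legs, each of length greater than $r_h$ by Step 1. By the triangle inequality,
\[
\opt > k\, r_h = k\,\opt\, 2^{h-H},
\]
so $k<2^{H-h}$. If $k\le 1$, the bound $k\le 2^{H-h}$ is trivial. Combining with the first paragraph gives the claim.

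\textbf{Main obstacle.} The delicate step is Step 1: I must verify precisely that partial nodes at height $h$ are created only when $x_t$ is inadmissible at all current partial nodes at that height. This requires handling the root-of-a-new-tree case and the fact that $v$ itself is not newly made partial, since it stops being partial when its right child is marked. It also requires checking that the newly marked nodes strictly below $v'$ are the only new partial nodes.
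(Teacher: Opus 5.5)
Your proposal is correct and follows essentially the paper's route: you reduce to showing there are at most $2^{H-h}$ partial nodes at height $h$, establish the pairwise separation $d(c_u,c_v)>r_h$ (the paper states this as \Cref{cl:small-space-new}, and you re-derive it somewhat more carefully as a time-invariant), and then use that the optimal solution must traverse all these centers. One small nit: counting $k$ legs presumes $\opt$ is a closed tour, whereas the paper counts only $N_h-1$ legs, which still yields $k\le 2^{H-h}$ and also holds if $\opt$ is measured as a Hamiltonian path (as the algorithm's own cost is).
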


In order to prove this claim, we will use the following important property.
\begin{claim}\label{cl:small-space-new}
    Let $u$ and $v$ be two partial nodes (nodes with exactly one marked child) at the same height $h$, with centers $c_u$ and $c_v$. Then $d(c_u, c_v) > r_h$.
\end{claim}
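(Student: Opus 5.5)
The plan is a direct contradiction argument built on two structural facts about when nodes are marked and when they are partial. Both facts come from the local insertion procedure $\ins(v,x)$.

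\textbf{Fact 1 (what one insertion creates).} Consider an insertion of $x_t$ into a tree. The procedure marks the path from $v'$ straight down to a leaf by always taking left children. Every node on this path is marked with center $x_t$.
\begin{itemize}
\item Each non-leaf node on the path now has a marked left child and an unmarked right child. The right child is unmarked because it lies under the previously unmarked $v'$, or $v'$ is the root of a fresh tree. So every such node becomes partial.
\item The partial nodes newly created at time $t$ therefore all have center $x_t$, and there is exactly one of them at each height from $1$ up to the height of $v'$.
\item The only other change is that the chosen partial node $v$ (if any) gets its right child marked. So $v$ stops being partial.
\end{itemize}

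\textbf{Fact 2 (partiality is an interval of time).} A marked node never changes its center. Once a node stops being partial, by having both children marked, it never becomes partial again. Hence a node that is partial now has been partial continuously since the insertion that marked it.

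\textbf{The argument.} Suppose for contradiction that $u$ and $v$ are distinct partial nodes at height $h \ge 1$ with $d(c_u,c_v) \le r_h$.

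They cannot have been marked by the same insertion, since by Fact 1 that insertion creates only one partial node per height. So without loss of generality $u$ was marked at some time $s$, and $v$ at a later time $t>s$, with $c_v = x_t$.

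By Fact 2, $u$ was partial at time $t$. Since $d(x_t,c_u) = d(c_v,c_u) \le r_h$, the point $x_t$ was admissible at $u$. So at time $t$ there existed a partial admissible node of height $h$. Two cases follow.
\begin{itemize}
\item The algorithm did not open a new tree at time $t$ (that only happens when no partial admissible node exists). It chose a partial admissible node $w$ of lowest height, so $\mathrm{height}(w) \le h$.
\item On the other hand, $v$ lies on the marked path starting at $v'$, the right child of $w$. So $h \le \mathrm{height}(v') = \mathrm{height}(w)-1$, i.e.\ $\mathrm{height}(w) \ge h+1$.
\end{itemize}
These two bounds contradict each other, so $d(c_u,c_v) > r_h$.

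\textbf{Main obstacle.} The delicate step is the bookkeeping in Facts 1 and 2. One must verify three things:
\begin{itemize}
\item $u$ really was partial and admissible at the moment $v$ was created, not merely marked;
\item the insertion that marks $v$ can mark nodes only at heights strictly below the chosen node $w$;
\item the new-tree case cannot occur when some partial admissible node exists.
\end{itemize}
The third point is immediate from the algorithm's rule for creating a tree. The first two follow from the left-path structure of $\ins$ and the irrevocability of marks.
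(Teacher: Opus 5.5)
Your proof is correct and follows the same route as the paper. Order the two nodes so that $u$ was marked before $v$; then $u$ was already a partial, admissible candidate when $x_t=c_v$ arrived, which contradicts the lowest-height insertion rule. Your Facts 1 and 2 and the explicit comparison $\mathrm{height}(w)\le h$ versus $\mathrm{height}(w)\ge h+1$ make rigorous what the paper's shorter argument leaves implicit: that $u$ was still partial at time $t$, that $u$ and $v$ cannot have been marked by the same insertion, and that marking a node at height $h$ forces the chosen node to be strictly higher.
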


\begin{proof}
    Without loss of generality, assume $u$ was marked before $v$ (created by the insertion of a point $x_v$). When the algorithm attempted to insert $x_v$, it checked all existing marked nodes at height $h$, including $u$. Since $u$ is partial, it had an available (unmarked) child, which means the insertion into $u$ would have succeeded if $x_v$ were admissible.
    
    The fact that $x_v$ was not inserted in the subtree under $u$, while the insertion algorithm prefers the partial node at the lowest possible
height, implies that $x_v$ is inadmissible at $u$. Therefore, $d(x_v, c_u) > r_h$. Since $c_v = x_v$, it follows that $d(c_v, c_u) > r_h$.
\end{proof}

Using \Cref{cl:small-space-new}, we now prove \Cref{clm:allocation stops when the empty space is large}.

\begin{proof}[Proof of \Cref{clm:allocation stops when the empty space is large}.]
    Let $N_h$ be the number of partial nodes at height $h$. By \Cref{cl:small-space-new}, their centers are $N_h$ points that are pairwise separated by distance $> r_h$. 
    Since the optimal TSP tour (of cost $\opt$) visits all points, it must connect these $N_h$ centers incurring a cost $> (N_h-1) \cdot r_h$.
    Substituting $r_h = R \cdot 2^{h-H} = \opt \cdot 2^{h-H}$, we get $N_h \leq 2^{H-h}$.
    
    Since each partial node at height $h$ has an unmarked child node at height $h-1$ that has $2^{h-1}$ unused array cells under it, the total unused space in the array is at most
    \[ 
        N_h \cdot 2^{h-1} \leq 2^{H-h} \cdot 2^{h-1} = 2^{H-1}. \qedhere
    \]
\end{proof}

We can now derive the space usage of the online algorithm from \Cref{clm:allocation stops when the empty space is large}.
\begin{lemma}\label{lem:smallspace-unused-space-metric}
    The unused space in the array is at most $\eps n$ at any time. As a consequence, the array uses at most $(1+\eps)n$ space after inserting all $n$ points.
\end{lemma}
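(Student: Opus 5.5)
The plan is to sum the bound of \Cref{clm:allocation stops when the empty space is large} over all heights. Then I compare the total with $\eps n$ using the choice of $H$.

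\textbf{Unused cells.} Fix any time $t$. First I check that every unused array cell lies under the unmarked child of some partial node; unused cells beyond the end of the last created tree are not part of the array. Insertion always marks a full root-to-leaf path, extended leftward from $v'$. Hence in each existing \mbc tree the set of marked nodes is closed upward (every marked non-root node has a marked parent), and every root of an existing tree is marked. Now take an unused leaf $\ell$ and walk up from it to its lowest marked ancestor $w$. Then $w$ has one marked child, which it has because it was marked by an insertion whose path continued downward, and one unmarked child on the path to $\ell$. So $w$ is partial and $\ell$ lies under its unmarked child. These subtrees are disjoint across distinct partial nodes, since each unmarked child has no marked descendants.

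\textbf{Summing over heights.} Partial nodes have heights $h\in\{1,\dots,H\}$. By \Cref{clm:allocation stops when the empty space is large}, the unused space attributed to height $h$ is at most $2^{H-1}$, across all trees simultaneously. Summing gives at most $H\cdot 2^{H-1}\le H\,2^{H}$ unused cells. With $H=\lfloor \log(\eps n)-\log\log n\rfloor$ we have $2^H\le \eps n/\log n$ and $H\le \log(\eps n)\le \log n$, using $\eps\le 1$. Therefore the unused space is at most $\eps n$.

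\textbf{Consequence.} The array length at any time equals the number of occupied cells plus the number of unused cells. After all $n$ insertions this is at most $n+\eps n=(1+\eps)n$.

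The main point needing care is the first step. The claim is stated per height, but it is really per height across all trees together; that holds because \Cref{cl:small-space-new} compares partial nodes in any trees, since the insertion rule searches all trees. I also need to justify that all unused space is covered by unmarked children of partial nodes. This rests on the upward-closed structure of marked sets and on root marking at tree creation.
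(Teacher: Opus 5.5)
Your proposal is correct and follows the paper's proof. You sum the per-height bound of $2^{H-1}$ (which holds across all trees simultaneously) over the $H$ possible heights of partial nodes, then use $2^H \le \eps n/\log n$ and $H \le \log n$ to get at most $\eps n$ unused cells; your upward-closedness argument, showing that every unused leaf lies under the unmarked child of its lowest marked ancestor, fills in the covering step that the paper only asserts.
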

\begin{proof}
    By \Cref{clm:allocation stops when the empty space is large}, it follows that the total empty space is at most:
\[
    H\cdot 2^{H-1}
    \le \frac{\eps n}{\log{n}} \cdot  \lfloor\log({\eps n) - \log{\log{n }}}\rfloor
   \le \eps n. \qedhere
\]
\end{proof}

\subsection{Competitive Ratio of the Algorithm}

We want to prove the following bound:
\begin{lemma}\label{lem:cost-smallspace-metric}
    The total cost of the solution produced by the algorithm is $O\left(\frac{\log^2 n}{\eps}\right) \cdot \opt$.
\end{lemma}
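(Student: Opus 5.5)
We split the total cost of the final tour into two parts: the cost of consecutive pairs that lie in the same \mbc tree, and the cost of the transitions between consecutive trees $A_i$ and $A_{i+1}$ (plus the closing edge).

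\textbf{Bounding the number of trees.} Every point lies in a ball of radius $R=\opt$ around the first point. A new tree is created only when no partial admissible node exists. So the first step is to show that the number of trees is at most $O(n/(\eps n/\log n)) = O(\log n/\eps)$. For this we use \Cref{lem:smallspace-unused-space-metric}: at any time the unused space is at most $\eps n$, and by \Cref{clm:allocation stops when the empty space is large} it is actually at most $H 2^{H-1}$. A tree is therefore mostly full except for at most $H 2^{H-1}$ cells in total over all trees. Since each tree has $2^H = \Theta(\eps n/\log n)$ cells, the number of trees $T$ satisfies $T\cdot 2^H \le n + \eps n$, giving $T = O(\log n/\eps)$. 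Each transition edge costs at most $\mathrm{diam} \le \opt$, because all points lie within distance $\opt$ of one another (the tour visits them all). The transitions thus contribute $O(\log n/\eps)\cdot\opt$.

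\textbf{Cost inside a tree.} Fix a tree and consider consecutive occupied leaves $a,b$ in it. Let $w$ be their lowest common ancestor, at height $h$. Then $a$ lies in the left subtree of $w$ and $b$ in the right one, and the left child $w_L$ and right child $w_R$ are both marked. Every point stored under a marked node $z$ at height $h'$ is within $r_{h'}$ of $c_z$. Moreover, each child's center is within $r_h$ of $c_w$: this holds by admissibility when the child was marked as a right child, and by equality of centers when it was marked as a left child. Hence $d(a,b)\le d(a,c_{w_L})+d(c_{w_L},c_w)+d(c_w,c_{w_R})+d(c_{w_R},b)\le 4r_h = O(r_h)$.

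We therefore charge each consecutive pair to the internal node $w$ where the two points split. Each internal node $w$ is charged at most once, since the last leaf of its left subtree and the first leaf of its right subtree form the only such pair. The within-tree cost is then bounded by $\sum_h O(r_h)\cdot M_h$, where $M_h$ is the number of nodes at height $h$ with both children marked.

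\textbf{Main obstacle: bounding $M_h$.} The crucial step is to show $M_h \cdot r_h = O(\log n/\eps)\cdot \opt$ for each height $h$. Summing over $H=O(\log n)$ heights then gives $O(\log^2 n/\eps)\cdot\opt$. For nodes that are currently partial we already have $N_h\le 2^{H-h}$ by \Cref{cl:small-space-new}. For fully marked nodes we instead use a separation-plus-counting argument. Group the height-$h$ nodes of all trees by tree. Within a single tree there are only $2^{H-h}$ nodes at height $h$, so the cost is $2^{H-h}\cdot 4 r_h = 4\opt$ per tree per height. This trivial bound, multiplied by $T=O(\log n/\eps)$ trees and $H$ heights, yields exactly $O(\log^2 n/\eps)\cdot\opt$.

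So the key inputs are only the per-pair bound $4r_h$, the count of nodes per height, and the bound $T$ on the number of trees. The delicate part is really the bound on $T$. That step needs care in verifying that the completely empty space counted by \Cref{lem:smallspace-unused-space-metric} covers every unused cell, including cells in the last tree, which is handled by the partition into subarrays under unmarked children of partial nodes given there.
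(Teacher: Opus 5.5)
Your proof is correct and is essentially the paper's argument. You bound the number of trees by $T\le(1+\eps)n/2^H\le 4\log n/\eps$ using \Cref{lem:smallspace-unused-space-metric}. You charge each within-tree consecutive pair to its lowest common ancestor at cost at most $4r_h$, which gives $4H\cdot\opt$ per tree as in \Cref{lem:general-elementary-cost}. You then pay at most $\opt$ per transition between trees. The ``separation-plus-counting'' detour is never actually used: the trivial count of $2^{H-h}$ nodes per height per tree suffices, and the separation property only enters through the bound on $T$.

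One intermediate assertion is false as written. A point stored under a marked node $z$ of height $h'$ need not lie within $r_{h'}$ of $c_z$. The reason is that the right child of $z$ may be centered at distance $r_{h'}$ from $c_z$, and a point at distance $r_{h'-1}$ from that new center can then be inserted below it. The correct bound, by induction along the leaf-to-$z$ path as in \Cref{cl:label}, is $\sum_{k=1}^{h'}r_k<2r_{h'}$. Plugging this in gives $d(a,c_{w_L})<r_h$ and $d(c_{w_R},b)<r_h$, so your final inequality $d(a,b)\le 4r_h$ still holds.
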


To prove this lemma, we first establish some important properties of the online algorithm.
\begin{claim}\label{cl:distance between direct child}
    Let $u,v$ be two marked nodes in the \mbc tree such that $v$ is the parent of $u$. Then, the point $c_u$ at the center of the ball labeling $u$ must be admissible at $v$.
\end{claim}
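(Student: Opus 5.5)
We prove \Cref{cl:distance between direct child} by looking at the single insertion $\ins(\cdot,x)$ that marked $u$, and splitting into two cases according to whether $v$ was marked in that same call or earlier. Marks are never revised, so once $c_u$ is admissible at $v$ at the moment $u$ is marked, this holds forever. The two cases are determined by where $u$ sits on the marked path. In the first case, $u$ is a non-initial node of the path $v', \dots, \text{leaf}$. In the second case, $u=v'$ itself.

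\textbf{Case 1: $u$ lies strictly below $v'$ on the marked path.} The procedure marks every node on the left-going path from $v'$ down to a leaf with the same center $x$. Since $v$ is the parent of $u$ and $u\neq v'$, the node $v$ is also on this path, so $c_v=c_u=x$. Then $d(c_u,c_v)=0\le r_h$, where $h$ is the height of $v$, so $c_u$ is admissible at $v$.

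\textbf{Case 2: $u=v'$.} Either $v'$ is the root of a newly created tree, or $v'$ is the right child of a partial node. In the first sub-case $u$ has no parent, so the claim is vacuous. In the second sub-case, the parent $v$ of $u$ is exactly the partial node chosen by the search. The algorithm only selects a node that is partial and \emph{admissible} for the arriving point $x$, so $d(x,c_v)\le r_h$. Since $c_u=x$, this gives $d(c_u,c_v)\le r_h$, which is the claim.

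I expect no real obstacle. The only point needing care is the bookkeeping that $v'$ is the unique node on the path whose parent is not also marked by the same call. Equivalently, every node marked by $\ins$ other than $v'$ has its parent marked in the same call. This follows directly from the description of the traversal: it starts at $v'$ and only ever moves to children.
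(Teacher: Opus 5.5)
Your proof is correct and is essentially the paper's argument. The paper splits on whether $c_u = c_v$, which is trivial, or not, in which case $u$ was marked by $\ins(v, c_u)$ and that call requires admissibility. You split instead on whether $u$ is the first node $v'$ of its marking path, which has the small advantage of making explicit why the second case forces $v$ to be the selected partial admissible node.
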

\begin{proof}
    There are two possibilities. The first is that $c_v = c_u$. In this case, the claim trivially holds. Otherwise, the labeling at $u$ was done by the procedure $\ins(v, c_u)$, which can only be run if $c_u$ is admissible at $v$.
\end{proof}
\begin{claim}\label{cl:label}
    Let $x$ be a point that is stored in the array cell at a leaf node $u$ of an \mbc tree, and let $v$ be a node of height $h$ which is an ancestor of $u$. Then $d(x,c_v) \leq \sum_{k=1}^{h} r_k \leq 2 \cdot r_h$.
\end{claim}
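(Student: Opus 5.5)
We prove \Cref{cl:label} by a telescoping argument along the root-to-leaf path from $v$ down to $u$. Let $v = w_h, w_{h-1}, \ldots, w_0 = u$ be the nodes on the path from $v$ to the leaf $u$, where $w_k$ has height $k$. Every node on this path is marked. Indeed, $u$ is marked because it stores $x$, and a node is marked only via $\ins$, which marks an entire downward path ending at a leaf. Conversely, a leaf is filled only at the end of such a path, and the path from the partial node $v$ (or the root) down to the leaf consists of marked nodes. Moreover, the leaf $u$ was marked by the insertion of $x$ itself, so $c_u = x$.

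\textbf{Step 1 (one level).} By \Cref{cl:distance between direct child}, applied to each parent--child pair $(w_k, w_{k-1})$ for $k = 1, \ldots, h$, the center $c_{w_{k-1}}$ is admissible at $w_k$. That is,
\[
d(c_{w_{k-1}}, c_{w_k}) \le r_k .
\]

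\textbf{Step 2 (triangle inequality).} Using $x = c_{w_0}$, we chain these bounds:
\[
d(x, c_v) \le \sum_{k=1}^{h} d(c_{w_{k-1}}, c_{w_k}) \le \sum_{k=1}^{h} r_k .
\]

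\textbf{Step 3 (geometric sum).} Since $r_k = R \cdot 2^{k-H} = r_h \cdot 2^{k-h}$, we get
\[
\sum_{k=1}^{h} r_k = r_h \sum_{j=0}^{h-1} 2^{-j} < 2 r_h .
\]

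The only point requiring care is justifying that $c_u = x$ and that all intermediate nodes are marked, so that \Cref{cl:distance between direct child} applies at every level. Both facts follow directly from the description of $\ins$: each call marks every node on a leftmost path with the inserted point as center, and ends by storing that point at the leaf of that path. No other step presents an obstacle.
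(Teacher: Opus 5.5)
Your proposal is correct and follows the paper's proof. The paper inducts on $h$, combining \Cref{cl:distance between direct child} with the triangle inequality at each level, and your telescoping chain is that induction unrolled. You also spell out details the paper leaves implicit: that $c_u = x$, that every node on the path is marked (this needs the easy fact that marked nodes are closed under taking ancestors), and the geometric-sum bound $\sum_{k=1}^{h} r_k < 2r_h$.
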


\begin{proof}
    We will prove the claim by induction on $h$.
    For the base case of $h=1$, we have
    from \cref{cl:distance between direct child} that $x$ is admissible at $v$. Therefore, $d(x,c_v) \leq r_1$.

    Let $v_0 = u, v_1, \ldots, v_{h-1}, v_h = v$ denote the vertices on the path from a leaf $u$ to its ancestor $v$ at height $h$. Let $z$ denote $c_{v_{h-1}}$. By the inductive hypothesis, we have $d(x, z) \le \sum_{k=1}^{h-1} r_k$. Furthermore, using \Cref{cl:distance between direct child}, we know that $d(z, c_v) \le r_h$.
    Therefore, by the triangle inequality, 
    \[
        d(x, c_v) \le d(x, z) + d(z, c_v) \le  \sum_{k=1}^h r_k \le 2\cdot r_h. \qedhere
    \]
\end{proof}

Using the above claim, we establish a bound on the total cost within an \mbc tree.

\begin{lemma}\label{lem:general-elementary-cost}
    Suppose a set of points is inserted into an \mbc tree $A$ of height $H$. Then, the total cost of the points in $A$ is at most $4H \cdot \opt$.
\end{lemma}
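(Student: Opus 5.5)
The total cost of the points in $A$ is the sum, over consecutive pairs of occupied cells of $A$, of their distances. I will charge each consecutive pair to the lowest common ancestor in the binary tree of the two leaves that hold the points. If $x$ and $y$ are consecutive occupied leaves with lowest common ancestor $w$ at height $h$, then $w$ is an ancestor of both leaves. By \Cref{cl:label}, $d(x,c_w)\le 2r_h$ and $d(y,c_w)\le 2r_h$, so the triangle inequality gives $d(x,y)\le 4r_h$.

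Next I count how many consecutive pairs are charged to each node. Fix a node $w$ at height $h$. A pair charged to $w$ must consist of the rightmost occupied leaf in the left subtree of $w$ and the leftmost occupied leaf in the right subtree of $w$. So at most one pair is charged to each node. At height $h$ there are $2^{H-h}$ nodes, so the total charge from height $h$ is at most $2^{H-h}\cdot 4r_h = 2^{H-h}\cdot 4R\,2^{h-H} = 4R = 4\opt$.

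Pairs are charged only to internal nodes, at heights $h=1,\dots,H$. Summing over these $H$ levels gives a total of at most $4H\cdot\opt$.

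The one step that needs care is that \Cref{cl:label} applies at the charged node $w$: it requires $w$ to be marked. This holds because $w$ is an ancestor of an occupied leaf, and every insertion marks the whole path from the node where it starts down to the leaf. That starting node is either the root of $A$ or a child of an already-marked node, so by induction every ancestor of an occupied leaf is marked.

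If the cost is meant to include only adjacencies inside $A$, the argument above is complete. Any transitions between consecutive trees $A_i$ are presumably accounted for separately in \Cref{lem:cost-smallspace-metric}.
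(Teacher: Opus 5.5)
Your proof is correct and follows the paper's argument essentially verbatim. You charge each consecutive pair to its lowest common ancestor, bound the distance by $4r_h$ via \Cref{cl:label}, note that each internal node is charged at most once, and sum $2^{H-h}\cdot 4r_h = 4\opt$ over the $H$ internal levels. Your check that every ancestor of an occupied leaf is marked (so that $c_w$ is defined) is a detail the paper leaves implicit, and your inductive justification of it is sound.
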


\begin{proof}
    Let $x_i, x_{i+1}$ be a pair of points that occupy consecutive non-empty cells in the array (the leaves of $A$). Let $v$ denote the least common ancestor of $x_i$ and $x_{i+1}$ in $A$, and let $v_\ell, v_r$ be the left and right children of $v$.
    
    By the structure of the tree, $x_i$ is the rightmost non-empty leaf in the subtree rooted at $v_\ell$, and $x_{i+1}$ is the leftmost non-empty leaf in the subtree rooted at $v_r$. 
    By  \Cref{cl:label}, we have $d(x_i, c_v) \leq 2 r_h$ and $d(c_v, x_{i+1}) \le 2 r_h$. Therefore,     
    by the triangle inequality, we get
    \[ d(x_i, x_{i+1}) \leq d(x_i, c_v) + d(c_v, x_{i+1}) \leq 2 r_h + 2 r_h = 4 r_h. \]
    
    We charge this distance $d(x_i, x_{i+1})$ to the node $v$. Every internal node in the \mbc tree is charged at most once. Summing over all heights $h=1 \dots H$, we get that the total cost is at most
    \[ 
        \sum_{h=1}^{H} 2^{H-h} \cdot 4(\opt \cdot 2^{h-H}) = \sum_{h=1}^{H} 4 \cdot \opt = 4 H \cdot \opt. \qedhere
    \]
\end{proof}

We now return to the proof of \Cref{lem:cost-smallspace-metric}.

\begin{proof}[Proof of \Cref{lem:cost-smallspace-metric}]
    By \Cref{lem:smallspace-unused-space-metric}, the total space used by the algorithm is at most $(1+\eps) n$. The size of the subarray under a single \mbc tree of height $H$ is
    \[
        2^H = 2^{\lfloor\log (\eps n) - \log \log n \rfloor} \geq 2^{\log (\eps n) - \log \log n - 1} = \frac{\eps n}{2 \log n}.
    \]
    Therefore, the number of \mbc trees, denoted $k$, is at most
    \[
        k \le \frac{(1+\eps)n}{\frac{\eps n}{2\log n}} = 2\left(1+ \frac{1}{\eps}\right) \log n
        \le \frac{4\log n}{\eps} .
    \]
    
    We now sum the costs. The total cost consists of the cost within each \mbc tree and the cost incurred between \mbc trees.
    
    \textbf{1. Cost within \mbc trees:} By \Cref{lem:general-elementary-cost}, the cost of a single \mbc tree is at most $4H \cdot \opt$. Since $H \le \log(\eps n) \le \log n$, the cumulative internal cost across all $k$ trees is at most:
    \[
       k \cdot (4H \cdot \opt) \le \frac{4\log n}{\eps} \cdot (4 \log n \cdot \opt) = \frac{16 \log^2 n}{\eps} \cdot \opt.
    \]
    
    \textbf{2. Cost between \mbc trees:} We are left with the cost incurred \emph{between} \mbc trees, i.e., by the rightmost point in $A_i$ and the leftmost one in $A_{i+1}$. For any pair of points, the distance is at most $\opt$. Thus, the total cost between trees is at most
    \[
        k \cdot \opt \le \frac{4\log n}{\eps} \cdot \opt.
    \]
    
    Adding these two types of cost, we get that the total cost is at most
    \[
        \frac{16 \log^2 n}{\eps} \cdot \opt + \frac{4\log n}{\eps} \cdot \opt = O\left(\frac{\log^2 n}{\eps}\right) \cdot \opt. \qedhere
    \]
\end{proof}

%% file: arxiv/small-error-lower-bound.tex
Recall from \Cref{sec:intro} that our lower bound that establishes \Cref{thm:lower} for online metric TSP is actually for the online sorting problem, i.e., on the line metric, Moreover, the lower bound explicitly assumes that the algorithm does not know the value of $\opt$, and therefore, does not apply to prior results in online sorting that assume knowledge of $\opt$.

Before proceeding further, we restate \Cref{thm:lower} for the online sorting problem, and also change notation in a way that will make it easier to describe the construction. The reader can easily verify that the theorem below implies \Cref{thm:lower}.

\begin{theorem}\label{thm:lower-sorting}
    Consider the online sorting problem on $n$ elements. Any deterministic online sorting algorithm that is $c/16$ competitive requires at least $n^{1+1/(2c)}/5$ space, for any $c \le \frac{1}{6}\cdot \frac{\log n}{\log\log n}$.
\end{theorem}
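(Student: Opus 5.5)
The plan is an adaptive adversary that runs at most $c$ rounds at widely separated scales. Fix $K\gg n^2$. Round $j=1,2,\dots$ presents elements only in $[K^{j},2K^{j}]$, so every element from earlier rounds behaves like $0$ relative to the current scale. Hence after round $j$ we have $\opt=\Theta(K^{j})$, and any adjacency in the array between a round-$j$ element and an earlier element costs $\Omega(K^{j})$. Suppose the algorithm is $c/16$-competitive and uses $m<n^{1+1/(2c)}/5$ cells. Competitiveness then allows only $O(c)$ such \emph{scale-crossing} adjacencies at the end of each round, and likewise only $O(c)$ ``long jumps'' of length $\Omega(K^j)$ among the round-$j$ elements themselves. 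Set $q:=n^{1/(2c)}$; the condition $c\le \frac{1}{6}\frac{\log n}{\log\log n}$ is exactly what makes $q\ge \log^{3}n$, leaving room to absorb $\polylog$ and $c$ losses below. Each round uses roughly $n/c$ elements. I would keep track of the array through its \emph{holes}, the maximal runs of empty cells.

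Within one round the adversary proceeds in sub-batches. Batch $i$ consists of $q^{i}$ elements evenly spaced in $[K^{j},2K^{j}]$, and each batch refines the grid of the previous one. The key lemma for a round is a dichotomy.
\begin{itemize}
\item[(a)] Suppose that after the round the holes lying between round-$j$ elements have total size at least about $n q/c$ and are individually of size at least $q$ (with a $\polylog$ slack).
\item[(b)] Otherwise, some refinement batch has to place a new element between two consecutive old round-$j$ elements that are adjacent (or separated by fewer than the required number of cells). Its sorted position then forces a jump of length about $K^j/q^{i}$ at about $q^{i}$ places. Summed over batches, this drives the cost above $(c/16)\opt$.
\end{itemize}
This is the online-sorting analogue of the classical $\sqrt n$ argument: to stay cheap under repeated refinement, the algorithm must reserve space proportional to $q$ times the number of elements already present.

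Next comes the cross-round argument. Round $j$ can obtain its reserved space either from fresh cells or from holes left by earlier rounds. Reusing an old hole means putting round-$j$ elements next to earlier, much smaller elements, which creates two scale-crossing adjacencies per reused hole. With only $O(c)$ such adjacencies allowed, round $j$ can recycle the space of at most $O(c)$ old holes. I would then charge: either the total array space grows by $\Omega(nq/c)$ over the rounds, or some round's large holes were concentrated in $O(c)$ huge holes. In the second case the adversary's next round exploits that concentration, since it can present its refinements so that they do not align with the few recycled holes. Iterating over $c$ rounds, the leftover factor $1/c$ is cancelled, giving $m\ge n^{1+1/(2c)}/5$.

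I expect the main obstacle to be this cross-round charging: showing that holes created in earlier rounds cannot be reused cheaply for too much of the space needed later. In particular, the algorithm could make one gigantic hole early and recycle it with only two crossing edges. My first attempt at handling this is to have the adversary, in each round, pick its value range adaptively (choosing which subinterval of $[K^j,2K^j]$ to refine) so that the demanded reserve has to be spread over at least $q$ separated positions. A single recycled hole can then host only one of them unless it is internally split, and splitting it is paid for as in step (b). The exponents will need careful tuning (batch growth $q$, number of rounds $c$, constant $16$) so that the cost lost in each round adds up to at most $c/16$ times $\opt$.
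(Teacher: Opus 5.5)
Your within-round dichotomy (a)/(b) is false, and this is a genuine gap rather than a tuning issue. Your rounds have a fixed length and a fixed value pattern, so the final grid of round $j$ is known in advance: it is the set of points $K^j+kK^j/q^{i^*}$ with $q^{i^*}\approx n/c$. An algorithm can open a fresh block of about $n/c+1$ cells for round $j$ and write the $k$-th grid point into cell $k$ of the block. Every refinement element then lands between its two value-neighbours. On every prefix the cost is the span of the current block plus one crossing edge to the previous block, i.e.\ $O(K^j)=O(\opt)$. Yet at the end of the round the block is completely full. Over the $c$ rounds this uses about $n+c$ cells and is $O(1)$-competitive on every prefix of your instance. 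Hence for $c$ beyond a constant neither (a) nor (b) holds for it, and no cross-round charging can recover the bound. The space the algorithm reserves ``in proportion to $q$ times the elements present'' is not wasted; it is consumed later in the same round. There is also a separate parameter problem. With growth factor $q$ per batch and large-hole threshold $q$, the $q^{i-1}$ small holes of size up to $q-1$ can absorb essentially all new elements of batch $i$ for free, so step (b) does not produce $\Omega(q^i)$ jumps. The paper instead uses growth $n^{1/c}$ per phase against threshold $n^{1/(2c)}$, so existing small holes absorb at most about $2n^{(j-1/2)/c}\ll n^{j/c}$ new elements.

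The missing idea is to change scale adaptively, exactly when the algorithm has reserved space, and to count holes rather than volume. In the paper the adversary ends the current epoch as soon as there are at least $n^{j/c}/2$ large new holes, each with at least $n^{1/(2c)}$ vacant cells. That is at least half as many holes as the epoch has elements. If this never happens within $c/2$ phases, the cost is already at least $(c/2)\cdot\opt/8$. If $c$ mixed holes ever appear, the cost is at least $c\cdot\opt/4$. Jumping to the next scale at that moment freezes the large holes: filling any of them later creates a mixed hole, so each epoch destroys at most $c$ old large holes. Induction then gives at least $n_i/4$ large holes after $n_i$ elements. The number of epochs is therefore not $c$ but as large as needed, since each epoch has at most $\sqrt n+1$ elements. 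After $n$ elements there are $(n-\sqrt n)/4$ disjoint holes of size $n^{1/(2c)}$, contradicting $m<n^{1+1/(2c)}/5$. Because holes are counted by number, your `gigantic hole' worry disappears. Recycling one giant old hole costs only that hole, while the many large new holes the next epoch must create inside it are counted afresh. The adaptive choice of subinterval you propose is therefore unnecessary.
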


We will assume the terminology of an {\em adversary} and an arbitrary deterministic {\em algorithm}. In every step of the online instance, the adversary reveals a batch of new elements, which the algorithm inserts in empty cells of the array. This sequence repeats itself until the adversary decides to end the instance.

\smallskip\noindent{\bf Epochs and Phases.}
Overall, the online steps are organized in a sequence of {\em epochs}. Conceptually, the adversary achieves two properties in each epoch. First, it increases the range of elements from the previous epoch by a multiplicative factor. Second, because of the change in the overall range, all elements in the previous epochs become much smaller than all elements in the current range. In effect, this allows us to treat the elements in previous epochs as (being close enough to) $0$, which helps simplify the analysis. More precisely, the elements presented by the adversary in the $i$th epoch are in the range $[2^{2i-1}, 2^{2i}]$. Note that by induction, all elements in the previous $i-1$ epochs are in the range $[2^1, 2^{2i-2}]$, which are at least $2^{2i-2}$ far from every element in epoch $i$. 

An epoch is further subdivided into a sequence of {\em phases}. In each phase, the adversary presents a batch of elements, which are inserted by the algorithm in the array. 
We now describe the $j$th phase of the $i$th epoch. As stated previously, the range of elements in this epoch is $[2^{2i-1}, 2^{2i}]$. Let us call this range $R_i$. For $j=0$, the adversary presents $2$ elements, $2^{2i-1}$ and $2^{2i}$. Subsequently, for every $j\ge 1$, the adversary presents a batch of $n^{j/c}-n^{{(j-1)}/c}$ elements {\em that are all distinct} such that the elements in all phases up to the $j$th one are {\em equally spaced in the range $R_i$}.  In particular, the $k$th element of the $i$th epoch at the end of the $j$th phase is given by $2^{2i-1} + k\cdot \frac{2^{2i-1}}{n^{j/c}}$ where $k$ ranges from $0$ to $n^{j/c}$. Note that any two consecutive elements in this epoch differ in their values by $\frac{2^{2i-1}}{n^{j/c}}$. In other words, the $j$th phase supplies the missing elements such that the complete set accumulated by the end of the phase is given by $2^{2i-1} + k\cdot \frac{2^{2i-1}}{n^{j/c}}$, where $k$ ranges from $0$ to $n^{j/c}$ (We give a pictorial depiction of the elements arriving in different phases of an epoch in \Cref{fig:phases}.)

\begin{figure}[tbh]
    \centering
    {{\includegraphics[width=\linewidth]{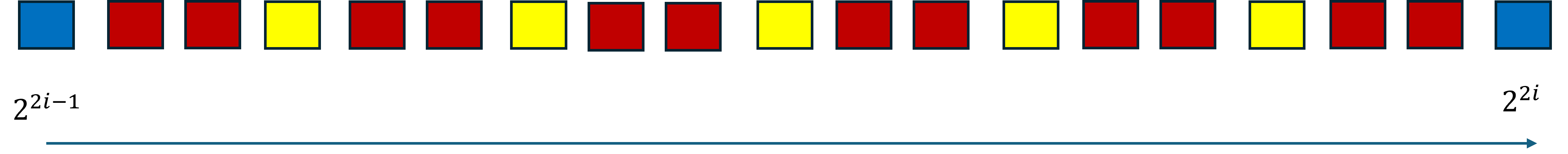} }}
    \caption{Distribution of elements arriving in different phases of epoch $i$. Initially, the two (blue) elements at the ends of the range for the current epoch arrive. The next two phases (yellow and red respectively) denote subsequent arrivals, illustrating the distribution of the elements over time as the epoch progresses.}
    \label{fig:phases}
\end{figure}
{\bf Real and Vacuous Holes.}
This brings us to the crucial decision taken by the adversary after the algorithm inserts the elements presented in a phase. The adversary can perform three possible actions:
\begin{itemize}
    \item[-] it can continue to the next phase of the current epoch,
    \item[-] it can terminate the current epoch and go to the first phase of the next epoch, or
    \item[-] it can terminate the instance altogether.
\end{itemize}
To describe the conditions for this decision, we need to define the notion of holes in the array: 

A {\em hole} is a set of contiguous cells in the array all of which are vacant. 

Initially, the whole array is a single hole.
Over time, as the array fills up, this single hole gives rise to multiple holes. For notational convenience, we extend this terminology to say that two consecutive array cells that are both occupied also have a {\em vacuous} hole between them. In contrast, a hole that comprises an actual non-empty sequence of vacant array cells is called a {\em real} hole. 

It will be useful to define categories of holes, both vacuous and real.
First, we categorize holes into three types based on the elements at their two ends:
\begin{itemize}
    \item[-] {\em Mixed Holes.} These are holes where one end of the hole is occupied by an element from a previous epoch and the other end is occupied by an element from the current epoch.
    \item[-] {\em New Holes.} These are holes where both ends of the hole are occupied by elements from the current epoch.
    \item[-] {\em Old Holes.} These are holes where both ends of the hole are occupied by elements from a previous epoch.
\end{itemize}
Second, orthogonally to the previous categorization, we group holes based on their size:
\begin{itemize}
   \item[-] {\em Large Holes.} These are holes that contain at least $n^{1/(2c)}$ (vacant) array cells. 
   (By definition, all large holes are real holes.)
    \item[-] {\em Small Holes.} These are holes that contain fewer than $n^{1/(2c)}$ (vacant) array cells. 
\end{itemize}
We define the phase of a hole based on its endpoints: a hole is considered to be \emph{created in phase $j$} of the current epoch if at least one of the elements defining its ends belongs to phase $j$.

After the algorithm inserts the elements in the $j$th phase of the $i$th epoch, the adversary makes its next move according to the current state of the array. In particular, it uses the following conditions, which are checked in order:
\begin{itemize}
    \item[-] Condition 1: If the array contains at least $c$ mixed holes (including both real and vacuous holes), then the adversary terminates the instance.
    \item[-] Condition 2: If the value of $j$ reaches $c/2 + 1$, then the adversary terminates the instance.
    \item[-] Condition 3: If the array contains at least $n^{j/c}/2$ large new holes, then the adversary terminates the current epoch and goes to the first phase of the $(i+1)$st epoch.
    \item[-] If none of the above conditions hold, then the adversary continues with the $(j+1)$st phase of the current epoch.
\end{itemize}

To establish the lower bound in \Cref{thm:lower-sorting}, we show two facts: first, that the instance ends with at most $n$ elements, and second, that the competitive ratio is at least $c/16$. 

\smallskip\noindent{\bf Upper Bound on Number of Elements.}
To show that the instance ends with at most $n$ elements, we first lower bound the number of large holes: 
\begin{claim}\label{cl:many-holes}
    Let $n_i$ denote the total number of elements already inserted at the time that the adversary switches from epoch $i$ to $i+1$. Then, the number of large holes is at least $n_i/4$.
\end{claim}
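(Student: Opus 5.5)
The plan is to argue by induction on the epoch index $i$, tracking two disjoint families of large holes at the moment the adversary switches from epoch $i$ to epoch $i+1$. The first family is the large holes that survive from earlier epochs. The second is the large new holes created in epoch $i$. Set $n_0 := 0$; the base case is then part of the general step with an empty first family.

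\textbf{Counting elements of epoch $i$.} Suppose the switch happens after phase $j$ of epoch $i$. This switch is triggered by Condition~3, so $j \ge 1$. By construction, the elements of epoch $i$ seen so far are exactly $n^{j/c}+1$ equally spaced points. Hence
\[
n_i = n_{i-1} + n^{j/c} + 1 .
\]
Condition~3 also tells us that the array holds at least $n^{j/c}/2$ large new holes. These are holes whose two ends are elements of epoch $i$.

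\textbf{Old holes surviving epoch $i$.} By the induction hypothesis, when epoch $i-1$ ended there were at least $n_{i-1}/4$ large holes. Every end of such a hole is an element of an epoch before $i$, so throughout epoch $i$ each of them is an old hole unless some epoch-$i$ element is placed inside it. Consider such an old hole that receives at least one epoch-$i$ element. Its leftmost and rightmost epoch-$i$ elements each border a hole whose other end is an old element. So it now contains at least $2$ mixed holes, counting vacuous ones. Conditions are checked in order, and Condition~1 did not fire at this switch. Therefore there are fewer than $c$ mixed holes, so fewer than $c/2$ of the previous large holes were touched. The untouched ones are still large old holes, numbering more than $n_{i-1}/4 - c/2$. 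They are disjoint from the large new holes, since their ends are old elements.

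\textbf{Combining the two families.} The total number of large holes is then at least
\[
\frac{n_{i-1}}{4} - \frac{c}{2} + \frac{n^{j/c}}{2}.
\]
It remains to check that this is at least
\[
\frac{n_i}{4} = \frac{n_{i-1}}{4} + \frac{n^{j/c}+1}{4},
\]
which is equivalent to
\[
\frac{n^{j/c}}{4} \ge \frac{c}{2} + \frac14 .
\]
This is where the hypothesis $c \le \frac16 \cdot \frac{\log n}{\log\log n}$ is used. It gives $n^{1/c} \ge \log^6 n$, which is far larger than $2c+1$ for all sufficiently large $n$, and $j \ge 1$ finishes the check.

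\textbf{Main obstacle.} The step I expect to need the most care is the claim that touching an old hole costs at least two mixed holes. I will treat the boundary case separately: a previous large hole at the very end of the array, with only one occupied end. Such a hole contributes at least one mixed hole. I would handle this either by noting there are at most two such end holes, which absorbs an additive constant, or by adding a sentinel convention. In either case the slack between $n^{1/c}$ and $c$ leaves ample room.
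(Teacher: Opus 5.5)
Your route is the paper's: induction on $i$, Condition~3 supplying $n^{j/c}/2$ large new holes, the failure of Condition~1 bounding how many previously large holes received an epoch-$i$ element, and the slack $n^{1/c}\gg c$ closing the inequality. Those steps are fine, including your count of two mixed holes per touched interior hole and your handling of the at most two end holes.

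The gap is the sentence ``This switch is triggered by Condition~3, so $j\ge 1$.'' Nothing in the construction forces this. The conditions are also checked after phase $0$, where Condition~3 requires only $n^{0}/2=1/2$ large new holes, i.e.\ a single one. The algorithm creates that hole just by placing $2^{2i-1}$ and $2^{2i}$ far apart. The epoch then contributes only $2$ elements, and your final inequality becomes $1/4\ge c/2+1/4$, which is false.

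This is not only a weakness of the write-up; the claim itself fails in this case. Let the algorithm put $2^{2i-1}$ in cell $i$ and $2^{2i}$ in cell $m-i+1$ in every epoch $i$. After phase $0$ of epoch $i$, the only non-vacuous hole is the new hole between cells $i$ and $m-i+1$. It is large as long as $m-2i\ge n^{1/(2c)}$, and there are exactly two (vacuous) mixed holes. So for $c\ge 3$ every epoch ends right after phase $0$. This gives $n_i=2i$ but only one large hole, which is fewer than $n_i/4$ once $i\ge 3$.

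The paper's proof makes the same unstated assumption when it writes $n_{curr}=n_i-n_{i-1}\ge n^{1/c}+1$. So the repair belongs in the adversary, not in your argument: check Condition~3 only for $j\ge 1$, i.e.\ never end an epoch immediately after phase $0$. With that change every completed epoch has $n^{j/c}+1$ elements for some $j\ge 1$, and your proof goes through as written.
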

\begin{proof}
We prove this claim by induction on $i$.

Suppose the current epoch $ i$ ended with $j$ phases. Let $n_{curr} = n^{j/c} + 1$ denote the number of elements inserted in the current epoch. 
Since the adversary created a new epoch $i+1$, by Condition 3, the number of large new holes at the end of epoch $i$ is at least $n_{curr}/2$. Now, note that by the inductive hypothesis, the number of large holes at the end of epoch $i-1$ was at least $n_{i-1}/4$. All these holes become old holes at the beginning of epoch $i$. If an element is inserted in such an old hole in epoch $i$, then that creates at least $1$ mixed hole in epoch $i$. (In fact, $2$ mixed holes are created unless the old hole is at one end, in which case only $1$ mixed hole is created.) Since Condition 1 was never satisfied, we can conclude that at most $c$ of these old holes have had insertion of elements in epoch $i$. In other words, at least $n_{i-1}/4  - c$ of the large old holes at the beginning of epoch $i$ have been untouched during epoch $i$, and therefore, they continue to be large holes at the end of epoch $i$. 

We now sum the number of large holes at the end of epoch $i$. The count includes the large new holes created during epoch $i$ and the large old holes from the beginning of epoch $i$ that were untouched during epoch $i$. Thus, the total number of large holes at the end of epoch $i$ is at least
    $n_{curr}/2 + n_{i-1}/4 - c \ge n_i/4$,
since
$n_{curr} = n_i - n_{i-1} \geq n^{1/c} +1 \ge 4c$ for $c \leq \frac{1}{6} \cdot \frac{\log n}{\log{\log{n}}}$. This completes the inductive proof.
\end{proof}

Next, we show that any single epoch cannot contribute more than $O(\sqrt{n})$ elements. This shows that if the instance has $n$ elements overall, then all but $\sqrt{n}$ of them must be from completed epochs.

\begin{claim}\label{cl:max-inserted-in-epoch}
    The maximum number of elements inserted in an epoch is at most $\sqrt{n}+1$. 
\end{claim}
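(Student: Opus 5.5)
The bound follows from Condition 2 together with the phase sizes. At the end of phase $j$ of an epoch, the elements of that epoch are exactly the points $2^{2i-1} + k\cdot 2^{2i-1}/n^{j/c}$ for $k = 0,\dots,n^{j/c}$. Hence the epoch contains exactly $n^{j/c}+1$ elements: the two endpoints from phase $0$, plus the telescoping sum $\sum_{j'=1}^{j}(n^{j'/c}-n^{(j'-1)/c}) = n^{j/c}-1$ from the later phases. So it suffices to bound the largest phase index $j$ that an epoch can reach.

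The adversary checks its conditions after each phase is inserted. Condition 2 terminates the whole instance as soon as $j$ reaches $c/2+1$, so no phase with index beyond that is ever presented. I need to pin down what the paper intends here. The bound $\sqrt{n}+1 = n^{(c/2)/c}+1$ corresponds to the last phase being $j = c/2$, so I will read Condition 2 as preventing phase $c/2+1$ from being presented. Equivalently, the adversary never continues to phase $j+1$ once $j+1 > c/2$. Under that reading, the maximum index of a completed phase is $c/2$, and the epoch holds at most $n^{(c/2)/c}+1 = \sqrt{n}+1$ elements.

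The main obstacle is this off-by-one in how Condition 2 is triggered. If phase $c/2+1$ is actually inserted and only then checked, the epoch could hold up to $n^{1/2+1/c}+1$ elements. In that case I would fall back on a slightly weaker statement, or argue that the final epoch's count is irrelevant to the element budget because the instance terminates right there. I would also note two minor points. First, $c/2$ need not be an integer; taking $\lfloor c/2\rfloor$ only makes the bound smaller. Second, an epoch can end earlier through Condition 1 or 3, which only reduces the count. Thus in every case the number of elements in an epoch is $n^{j/c}+1$ with $j\le c/2$, giving at most $\sqrt{n}+1$.
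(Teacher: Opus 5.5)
Your proposal is correct and follows the paper's proof: both count exactly $n^{j/c}+1$ elements at the end of phase $j$ and use Condition 2 to cap the phase index at $c/2$, giving $n^{(c/2)/c}+1=\sqrt{n}+1$. The paper resolves the off-by-one you flag the same way you do, by stating that an epoch ends once it reaches the beginning of phase $c/2+1$, so the fallback argument is not needed.
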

\begin{proof}

    Note that an epoch ends if it reaches the beginning of the $(c/2+1)$st phase.
    Thus, the number of elements in an epoch is at most the number of inserted elements till the end of the $(c/2)$th phase. Since the total number of elements till the $j$th phase is $n^{j/c} + 1$, it follows that the total number of elements in an epoch is at most $n^{(c/2)/c} + 1 = \sqrt{n}+1$.
\end{proof}
Using the above claims, we now show that the instance cannot have more than $n$ elements.
\begin{lemma}\label{lem:terminate}
    The number of elements in the instance constructed by the adversary has at most $n$ elements.
\end{lemma}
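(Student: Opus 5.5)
The plan is to combine \Cref{cl:many-holes} and \Cref{cl:max-inserted-in-epoch} with the fact that the array has size $m < n^{1+1/(2c)}/5$, which is the space bound assumed (for contradiction) in \Cref{thm:lower-sorting}. Formally, the adversary keeps presenting batches only while the total count stays at most $n$. The lemma is therefore really the claim that the adversary terminates, via Condition 1 or Condition 2, before the count would exceed $n$. Equivalently, Condition 3 cannot fire so many times that the completed epochs alone hold more than $n - \sqrt{n} - 1$ elements.

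Concretely, suppose the adversary switches from epoch $i$ to epoch $i+1$, with $n_i$ elements inserted so far. By \Cref{cl:many-holes} there are at least $n_i/4$ large holes, and each contains at least $n^{1/(2c)}$ vacant cells. Large holes are disjoint, so the vacant space is at least $\frac{n_i}{4}\, n^{1/(2c)}$. This must be at most $m < n^{1+1/(2c)}/5$, which gives
\[
  n_i < \tfrac{4}{5}\, n .
\]
So whenever a new epoch starts, fewer than $4n/5$ elements have been inserted. By \Cref{cl:max-inserted-in-epoch}, the new epoch adds at most $\sqrt{n}+1$ further elements before Condition 2 forces termination or Condition 3 moves to the next epoch. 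In the latter case, the bound $n_{i+1} < 4n/5$ applies again at the next switch. The total number of elements is thus at most $4n/5 + \sqrt{n} + 1 \le n$ for $n$ large enough; the range of $c$ already forces $n$ to be large. For the first epoch, the same bound holds trivially, since $n_0 = 0$.

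The main obstacle is the bookkeeping in the inductive step. I must check that the argument applies at every epoch boundary, including the last epoch, which ends by Condition 1 or 2 rather than Condition 3. I also need to confirm that \Cref{cl:many-holes} counts holes present in the array simultaneously, so that their vacant cells can be summed against $m$. I expect the inequality $\frac{n_i}{4} n^{1/(2c)} \le m$ to be the crux, and the rest to be arithmetic.
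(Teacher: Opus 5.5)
Your proposal is correct and follows essentially the paper's argument. The paper also combines \Cref{cl:many-holes}, applied at the last completed epoch, with \Cref{cl:max-inserted-in-epoch} and counts the vacant cells of the large holes against the array size. The only difference is framing: the paper argues by contradiction from $n+1$ elements, which leaves at least $n-\sqrt{n}$ elements in completed epochs, whereas you bound directly $n_i < 4n/5$ at each epoch switch and add at most $\sqrt{n}+1$ elements from the final epoch.
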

\begin{proof}
    Assume for contradiction that $n+1$ elements were inserted. Then, by \Cref{cl:max-inserted-in-epoch}, at least $n-\sqrt{n}$ elements were inserted in completed epochs. Then, by \Cref{cl:many-holes}, there are at least $(n-\sqrt{n})/4$ large holes at the end of the last completed epoch. Since each large hole has $n^{1/(2c)}$ vacant array cells, it follows that the size of the array is at least
    \[
        \frac{n-\sqrt{n}}{4}\cdot n^{1/(2c)}
        > \frac{n^{1+ 1/(2c)}}{5},
    \]
    for $n > 25$. This contradicts the fact that the array is of size $n^{1+1/(2c)}/5$ in the statement of \Cref{thm:lower-sorting}.
\end{proof}

\smallskip\noindent{\bf Lower Bound on the Competitive Ratio.}
Finally, we show that the competitive ratio is at least $c/16$. 
We start with the case when the instance is terminated by the first condition:

\begin{lemma}\label{lem:old-holes-cost}
    If the array contains at least $c$ mixed holes, then the competitive ratio is at least $c/4$.
\end{lemma}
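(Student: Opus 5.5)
We compare the algorithm's cost with $\opt$ at the moment Condition 1 triggers, after some phase $j$ of epoch $i$.

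\textbf{Upper bound on $\opt$.} All elements seen so far lie in $[2, 2^{2i}]$. A sorted order is optimal on the line, so $\opt$ is at most twice the span of the elements, i.e.\ at most $2\cdot 2^{2i}$. The exact constant depends on whether the tour is closed; if it is open, the span $2^{2i}-2$ suffices.

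\textbf{Lower bound on the algorithm's cost.} Each mixed hole, real or vacuous, has one endpoint occupied by an element from a previous epoch, with value at most $2^{2i-2}$, and the other endpoint occupied by a current-epoch element, with value at least $2^{2i-1}$. These two endpoints are consecutive non-empty cells of the array, because a hole consists only of vacant cells. So each mixed hole contributes a distinct consecutive pair to the algorithm's sum, and each such pair costs at least $2^{2i-1}-2^{2i-2} = 2^{2i-2}$. Distinct holes give distinct consecutive pairs, since each adjacent pair of non-empty cells bounds exactly one hole. The algorithm's cost is therefore at least $c\cdot 2^{2i-2}$.

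\textbf{Conclusion.} Dividing the two bounds gives a ratio of at least $c\cdot 2^{2i-2}/(2\cdot 2^{2i}) = c/8$ under the pessimistic closed-tour bound, and at least $c/4$ under the open-path bound $\opt\le 2^{2i}$. The constant is the main point of care. The claimed $c/4$ matches $\opt \le 2^{2i}$, the span on the line, which I take to be the paper's convention in online sorting (the sum of consecutive differences, with no wrap-around). If wrap-around were counted, the bound would become $c/8$, which is still at least the final target $c/16$.

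A further check is that the instance is nontrivial in epoch $i$. If $i=1$ there are no previous-epoch elements, so no mixed holes exist and Condition 1 cannot fire; hence $i\ge 2$ whenever the lemma applies, and the bound $2^{2i-2}$ on previous-epoch values is valid.
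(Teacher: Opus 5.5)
Your proof is correct and follows the paper's argument: each mixed hole separates a previous-epoch element (value at most $2^{2i-2}$) from a current-epoch element (value at least $2^{2i-1}$), so it costs at least $2^{2i-2}$. Since $\opt \le 2^{2i}$ under the open-path (sum of consecutive differences) convention used for online sorting, the ratio is at least $c/4$. Your extra remarks (distinct holes give distinct consecutive pairs, $i \ge 2$, and the closed-tour variant) only make explicit details the paper leaves implicit.
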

\begin{proof}
Recall that $R_i = [2^{2i-1}, 2^{2i}]$ denotes the range for the current epoch $i$. The elements in all the previous $i-1$ epochs are in the range  $\cup_{j=1}^{i-1} [2^{2j-1}, 2^{2j}] \subseteq [2^1, 2^{2i-2}]$. As a consequence, the difference between any element in the current epoch and any element in a previous epoch is at least $2^{2i-1} - 2^{2i-2} = 2^{2i-2}$. Therefore, every mixed hole induces a cost of at least $2^{2i-2}$.

Now, note that optimal cost is at most the maximum value in the current range $R_i$, i.e., $\opt \le 2^{2i}$. Since there are at least $c$ mixed holes, the total cost is at least $c\cdot 2^{2i-2}$, which implies a competitive ratio of at least $c/4$.
\end{proof}

Next, we consider the case where the instance is terminated by the second condition. Note that in this case, the third condition did not hold at the end of the $(c/2)$th phase that immediately preceded the current phase. (Otherwise, the adversary would have moved to the next epoch.) We establish the following claim, which yields a lower bound of the cost of the algorithm based on the index of the current phase.

\begin{claim}\label{cl:few-holes}
    Suppose $c \leq \frac{1}{6} \cdot \frac{\log{n}}{\log{\log{n}}}$. Then, if the array contains fewer than $n^{j/c}/2$ large new holes after inserting the elements in the $j$th phase of epoch $i$, then the cost of the algorithm is at least $j\cdot 2^{2i}/8$.
\end{claim}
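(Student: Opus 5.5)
We argue phase by phase. For each $j' \le j$ we exhibit, among the consecutive pairs of current-epoch elements, a set of ``crossings'' whose total length is about $2^{2i}/8$. Contributions from different phases should be essentially disjoint, because each phase's crossings run at a different spacing scale. The main obstacle is making this disjointness precise. We only need the current-epoch elements, since mixed holes and old elements only add cost.

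\textbf{Fixing a phase $j'\le j$.} After phase $j'$ the elements of epoch $i$ are equally spaced at distance $\delta_{j'} = 2^{2i-1}/n^{j'/c}$, and there are $n^{j'/c}+1$ of them. Let $S_{j'}$ be this set. Consider the array order restricted to $S_{j'}$, i.e., the order in which these elements appear left to right in the array at the end of phase $j$. Since insertions never move elements, this order is fixed once phase $j'$ ends.

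\textbf{A sorted order is cheap only with many large holes.} If the elements of $S_{j'}$ are consecutive in value and already sit next to each other in the array with no large gap, then later phases must insert the $n^{(j'+1)/c}/n^{j'/c} - 1 \approx n^{1/c}$ elements lying in value between them somewhere else. Here $n^{1/c} \ge n^{1/(2c)}$, i.e., more than fits in a small hole. Those elements must then be placed far away in the array, so the tour jumps out of the $\delta_{j'}$-neighbourhood and back.

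\textbf{Charging.} Concretely, for each pair $(a,a+\delta_{j'})$ of value-consecutive elements of $S_{j'}$, the path in the final array between them either
\begin{itemize}
\item contains a large hole; these are large new holes (or ones created later), and we use the count of fewer than $n^{j/c}/2$ large new holes; or
\item contains fewer than $n^{1/(2c)}$ cells per gap.
\end{itemize}
Fewer than $n^{j/c}/2 \le$ half of the relevant pairs can be ``protected'' by large holes. (This needs care: compare against $n^{j'/c}$ rather than $n^{j/c}$. I would count, for each $j'$, only the large holes separating elements of $S_{j'}$. The holes created in phase $j'$ are at most those present now, so by Condition 3 failing at phase $j'$ there are fewer than $n^{j'/c}/2$ of them.)

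At least $n^{j'/c}/2$ pairs are therefore unprotected. For each unprotected pair, the values strictly between $a$ and $a+\delta_{j'}$ belong to later phases. They number about $n^{(j-j')/c} \gg n^{1/(2c)}$, so they cannot all fit between $a$ and $a+\delta_{j'}$. Hence the array order is not monotone across this pair, and the tour must traverse a length of at least about $\delta_{j'}/2$ attributable to this pair: an excursion outside the interval $[a,a+\delta_{j'}]$, or a double traversal.

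\textbf{Summing.} Summing over the unprotected pairs gives at least $(n^{j'/c}/2)\cdot \delta_{j'}/2 = 2^{2i}/8$ for phase $j'$. To get the total $j\cdot 2^{2i}/8$, I would show that the lengths charged at scale $j'$ can be attributed to tour segments of length in $[\delta_{j'}/n^{1/c}, \delta_{j'}]$, and are therefore distinct across $j'$. Alternatively, I would use a level-crossing argument. For each point $y$ in the range, count the number of times the tour crosses $y$. Scale $j'$ contributes one extra crossing of every $y$ inside an unprotected $\delta_{j'}$-interval, and these extra crossings add across scales because they nest.

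I expect this nesting/additivity step to be the main obstacle. I would try the crossing-count formulation first, since additivity there reduces to showing the crossing count at $y$ is at least one plus the number of scales $j'$ at which $y$ lies in an unprotected interval.
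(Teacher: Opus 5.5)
Your proposal has a genuine gap at each of its load-bearing steps.

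\textbf{Counting protected pairs.} The claim that fewer than $n^{j'/c}/2$ pairs can be protected does not follow from Condition 3 failing. Whether $(a,a+\delta_{j'})$ is protected depends on the array interval between the positions of $a$ and $a+\delta_{j'}$, and a single large hole can lie in the array interval of \emph{every} pair. For example, place the even-indexed elements of $S_{j'}$, then one large hole, then the odd-indexed ones. All $n^{j'/c}$ pairs are then protected, yet there is only one large new hole.

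\textbf{Unprotected pairs.} The step ``unprotected $\Rightarrow$ the in-between values cannot fit'' conflates ``no large hole between the two positions'' with ``few cells between them''; that interval can be long and made entirely of small holes. The step also relies on elements of later phases. For $j'=j$ there is no value strictly between $a$ and $a+\delta_j$ at the end of phase $j$, which is when the cost is measured, so the current phase contributes nothing in your scheme.

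\textbf{Additivity.} The step that actually produces the factor $j$, additivity across scales, is left open. Your excursion argument does not give one extra crossing of every level $y$ in an unprotected interval. A detour that enters $[a,a+\delta_{j'}]$ from below to reach some $z$ and returns crosses only levels below $z$. Nothing in your scheme prevents the same tour edges from being charged at several scales.

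The missing idea is to do the bookkeeping in array space, where disjointness across phases is automatic. This is essentially what the paper does.

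\textbf{The potential.} Let $\Phi_{j'}$ be the total cost of the \emph{small new holes} at the end of phase $j'$. This is a lower bound on the algorithm's cost, since each hole is one pair of consecutive non-empty cells.

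\textbf{Monotonicity.} Inserting elements into a small new hole only splits it into small new holes. By the triangle inequality, their total cost is at least that of the original hole. So the holes lying inside the small new holes of phase $j'-1$ already account for $\Phi_{j'-1}$.

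\textbf{Fresh holes per phase.} Conditions 1 and 3 fail after every phase $j'\le j$. Hence there are at least $n^{j'/c}-n^{j'/c}/2-c\ge n^{j'/c}/3$ small new holes. At most $2n^{(j'-1/2)/c}$ of them lie inside the at most $n^{(j'-1)/c}$ small new holes of phase $j'-1$, since each of those has fewer than $n^{1/(2c)}$ vacant cells. So at least $n^{j'/c}/4$ small new holes are fresh. Each costs at least $\delta_{j'}$, simply because distinct elements of the epoch are at least $\delta_{j'}$ apart.

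\textbf{Conclusion.} Thus $\Phi_{j'}\ge\Phi_{j'-1}+2^{2i-3}$, and induction gives $j\cdot 2^{2i}/8$. This needs neither a protection notion nor any reasoning about where future elements must go.
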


\begin{proof}
In this proof, we obtain a lower bound on the cost by only considering small new holes, and disregarding all other types of holes. Specifically, in each phase, we account only for the cost of small holes newly created during that phase. Once a small new hole is created in a phase, further subdivisions of this hole in subsequent phases are not counted, since these might not cause additional cost.

Suppose we are at the end of the $j$th phase. If epoch $i$ did not end after phase $j$, then conditions 1 and 3 did not hold at the end of this phase. Therefore, there can be at most $n^{j/c}/2$ large new holes, in addition to at most $c$ mixed holes. Since $n^{j/c} + 1$ elements were inserted in the current epoch, the number of small new holes is at least:
\begin{equation}\label{eq:small}
    n^{j/c} -  n^{j/c}/2 - c = n^{j/c}/2 - c \geq n^{j/c}/3,
\end{equation}
where the inequality holds for any $j\ge 1$, since $n^{1/c} \ge 6c$ for $c \leq \frac{1}{6} \cdot \frac{\log{n}}{\log{\log{n}}}$.

Observe that inserting an element into an empty cell within an existing small new hole does not incur any additional cost if the element's value lies between those of the elements at its two ends. Consequently, to derive a valid lower bound for the cost in phase $j$, we exclude from our analysis all elements inserted into small holes that were created during the first $j-1$ phases. However, if an element is inserted in an existing large new hole, and this creates a small new hole, this does increase the cost since we explicitly excluded the cost of all large holes from the analysis.
Likewise, if the insertion of elements into a mixed hole creates a small new hole, we also include its cost in our accounting.

So, we want to discard small new holes of two types from the above count: (a) those that have elements from a previous phase before $j$ at either end, and (b) those that were created in the current phase $j$ by inserting elements in phase $j$ into a small new hole that existed at the end of phase $j-1$. We count these together by counting the total number of cells that were one of two types at the end of phase $j-1$: (a) either a vacant cell in a small new hole or (b) an occupied cell at one end of a small new hole.

The number of elements inserted in the current epoch up to phase $j-1$ is given by:
\[
2 + \sum_{k=1}^{j-1} \left( n^{k/c} - n^{(k-1)/c} \right) = 2 + n^{(j-1)/c} - n^0 = n^{(j-1)/c} + 1.
\]
Therefore, the number of small new holes at the end of phase $j-1$ is at most $n^{(j-1)/c}$,
since each new hole has an element from the current epoch on each side.
Note that each small new hole has at most $n^{1/(2c)}$ vacant cells. 
Therefore, the total number of cells that are either vacant in a small new hole or 
at one end of a small new hole at the end of phase $j-1$ is at most:
\begin{equation}\label{eq:double-count}
n^{(j-1)/c} \cdot (n^{1/(2c)} - 1 + 2) = n^{(j-1/2)/c} + n^{(j-1)/c} 
\le 2 n^{(j-1/2)/c}.
\end{equation}

Subtracting \eqref{eq:double-count} from \eqref{eq:small}, we get that the 
number of small new holes that incur additional cost in the $j$th phase is at least:
\[
    n^{j/c}/3 - 2 n^{(j-1/2)/c} \geq n^{j/c}/4,
\]
where the inequality holds since $c \leq \frac{1}{6} \cdot \frac{\log n}{\log{\log{n}}}$.

Next, we determine the minimum cost per hole. The elements in range $R_i$ are uniformly spaced such that the minimum difference between any two distinct elements in phase $j$ is
\[
\frac{2^{2i}-2^{2i-1}}{n^{j/c}} = \frac{2^{2i-1}}{n^{j/c}}.
\]
Multiplying the number of contributing holes by the minimum cost per hole, the total cost accumulated strictly within phase $j$ is at least
\[
\frac{2^{2i-1}}{n^{j/c}} \cdot \frac{n^{j/c}}{4} = \frac{2^{2i-1}}{4} = 2^{2i-3}.
\]
Multiplying by the number of phases, the total cost is at least $j \cdot 2^{2i-3}$.
\end{proof}

Using the above claim, we immediately get the following:

\begin{lemma}\label{lem:large-phase}
    If the algorithm terminates with $j = c/2 + 1$, then the competitive ratio of the algorithm is at least $c/16$. 
\end{lemma}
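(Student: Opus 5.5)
The plan is to apply \Cref{cl:few-holes} to phase $j = c/2$ and compare the resulting cost with an upper bound on $\opt$. The instance stops with $j = c/2+1$ only because Condition 2 fired. So at the end of phase $c/2$ of the current epoch $i$, neither Condition 1 nor Condition 3 held; otherwise the instance or the epoch would have ended and phase $c/2+1$ would never have been reached. In particular, at the end of phase $c/2$ the array had fewer than $n^{(c/2)/c}/2$ large new holes. This is exactly the hypothesis of \Cref{cl:few-holes} with $j = c/2$, and the assumption $c \le \frac{1}{6}\cdot\frac{\log n}{\log\log n}$ is in force from \Cref{thm:lower-sorting}.

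The claim then gives that the algorithm's cost at the end of phase $c/2$ is at least $(c/2)\cdot 2^{2i}/8 = c\cdot 2^{2i}/16$. I would then check that this lower bound still holds at termination. The counted costs come from consecutive pairs bordering small new holes, and in online sorting inserting further elements only splits holes. By the triangle inequality on the line, splitting never decreases the total cost. So the final cost is at least the cost at the end of phase $c/2$, and hence also at least $c\cdot 2^{2i}/16$.

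For the other side, I would bound $\opt$ by the range of all elements presented. All elements lie in $[2, 2^{2i}]$, with earlier epochs in $[2,2^{2i-2}]$ and the current epoch in $R_i$. Sorting them gives a path of length at most $2^{2i}$; if the tour is treated as closed, this is at most $2^{2i+1}$. I would use $\opt \le 2^{2i}$, consistent with \Cref{lem:old-holes-cost}. The ratio is then at least $(c\cdot 2^{2i}/16)/2^{2i} = c/16$.

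The main obstacle I expect is the monotonicity step: making sure that the cost counted in \Cref{cl:few-holes} at phase $c/2$ survives the insertions of phase $c/2+1$. If that argument is delicate, the fallback is to apply the claim's accounting directly to the final array. Its proof only uses the absence of large holes at phase $j$ to count the small new holes created in each phase, so it may extend to the final state without modification. The constants $j/8$ and the value of $\opt$ fit exactly to give $c/16$.
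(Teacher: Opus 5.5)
Your argument is correct and is essentially the paper's proof: apply \Cref{cl:few-holes} at $j=c/2$, whose hypothesis holds because Conditions~1 and~3 both failed at the end of that phase, and then divide by $\opt\le 2^{2i}$. Your monotonicity step (on the line, inserting $x$ between $a$ and $b$ replaces $|a-b|$ by $|a-x|+|x-b|\ge |a-b|$, and insertions beyond the ends only add cost) correctly fills in what the paper leaves implicit, so the fallback is unnecessary --- and it would not work as stated, since nothing bounds the number of large new holes after phase $c/2+1$ (Condition~2 fires before Condition~3 is checked).
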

\begin{proof}
    This is an immediate corollary of \cref{cl:few-holes}. Specifically, if $j$ reaches $c/2+1$, then the cost at the end of the last completed phase is at least $\frac{c/2}{8} \cdot \opt = \frac{c}{16} \cdot \opt$. 
\end{proof}

%% file: arxiv/closing-remarks.tex
In this paper, we explored the tradeoff between competitive ratio and space usage for the online metric TSP problem. In particular, we gave a deterministic algorithm that uses $(1+\eps)n$ space and improves the competitive ratio from $\Theta(\sqrt{n})$ (for $n$ space) to $O(\log^3 n/\eps)$. We also showed that this cannot be improved further to $O(1)$-competitiveness using a deterministic algorithm, unless the space used increases to $n^{1+\gamma}$ for constant $\gamma > 0$.

Our work raises several interesting questions. First, our lower bound only applies to deterministic algorithms; we believe that the lower bound might also hold for randomized algorithms, but the current construction does not readily extend to the randomized setting. Second, while we rule out $O(1)$-competitive deterministic algorithms with $m= O(n\cdot \polylog(n))$, it is quite possible that setting $m = n^{1+\eps}$ for some constant $\eps > 0$ yields an $O(1)$-competitive algorithm. Indeed, improving on the competitive ratio of $O(\log n)$~\cite{RosenkrantzSL77,ImaseW91}, even with unlimited space usage, remains open.

%% file: arxiv/general-metric-doubling.tex
In this section, we extend the result in \Cref{sec:elementary} to the case of unknown $\opt$, at the cost of an additional factor of $O(\log n)$ in the competitive ratio. 

Following \cite{AzarPV26}, we apply a doubling scheme on the value of the optimal solution. The doubling scheme uses two parameters: the optimal cost $\opt$ and the number of inserted points for a given guess on $\opt$. We have an outer loop that (at least) doubles the estimate on $\opt$ in each iteration, and an inner loop that doubles the number of inserted points. 
This nested doubling process is initialized after sequentially inserting the first two points (in the first two array cells); twice their distance provides an initial estimate for $\opt$.

Each iteration of the outer loop is called an epoch. 
Since \mbc trees are not defined for small values of $n$ (see \Cref{rem:epsilon-cannot-be-too-small}), we start any epoch by inserting the first $\frac{\log^2{1/\eps}}{\eps}$ points consecutively before starting the doubling scheme.
In the $i$-th epoch, the guess on $\opt$ is denoted $\opt_i$. When the arrival of a new point causes the estimated cost (defined as twice the cost of the minimum spanning tree on the points seen so far) to exceed $\opt_i$, the $i$-th epoch ends and the $(i+1)$-st epoch starts. The value of $\opt$ in the new epoch is set to be at least double that of the previous epoch. Each epoch is assigned new space in the array, which begins immediately after the last cell used in the previous epoch.

Within an epoch, each iteration of the inner loop is called a phase.
Let $n_{i,j}$ denote the bound on the number of inserted points in the $j$-th phase of the $i$-th epoch. The phase ends when the number of points exceeds $n_{i,j}$, and the $(j+1)$-st phase starts with $n_{j+1} := 2 n_{i,j}$. Overall, we define $n_{i,j} := 2^j$.

In the $j$-th phase, we allocate a new subarray that begins immediately after the subarray used in the $(j-1)$-st phase. The amount of space allocated in the $j$-th phase is $(1+\eps/3)n_{i,j}$.
During the $j$-th phase, we employ the algorithm from \Cref{sec:elementary} for inserting $n_{i,j}$ points with the variable $\eps / 3$ (instead of $\eps$) as a black box.

In the proofs below, we use \(k_i, n_i\) to respectively denote the number of phases and points in the $i$-th epoch, and \(n_{i,j}\) to denote the number of points in its \(j\)-th phase.
\begin{lemma}[Space Bound]\label{lem:unknown-small-space-space}
    The algorithm inserts $n$ points in $(1+\eps)n$ space.
\end{lemma}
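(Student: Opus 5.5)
The plan is to account for the space epoch by epoch and phase by phase, and to show that the only overhead beyond the factor $(1+\eps/3)$ comes from incompletely filled phases. Because each phase doubles, this overhead is dominated by a geometric sum.

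\textbf{Step 1: one phase.} Within epoch $i$, phase $j$ is allocated $(1+\eps/3)n_{i,j} = (1+\eps/3)2^j$ cells. Running the algorithm of \Cref{sec:elementary} with parameter $\eps/3$ on at most $n_{i,j}$ points stays inside this allocation, by \Cref{lem:smallspace-unused-space-metric} applied with $n_{i,j}$ in place of $n$. (The guess $\opt_i$ is valid while the epoch lasts, since $\opt\le 2\,\mathrm{MST}\le \opt_i$.) A phase that actually opens receives its first point only after the previous phase overflowed.

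\textbf{Step 2: one epoch.} Every phase $j<k_i$ is completely filled with $2^j$ points. So all completed phases together hold about $2^{k_i}$ points, and the last, possibly nearly empty, phase uses $(1+\eps/3)2^{k_i}$ cells. Summing the geometric series, the space of epoch $i$ is at most $(1+\eps/3)(2^{k_i+1})$ plus the $\log^2(1/\eps)/\eps$ initial cells, which are fully used. Measured against $n_i \ge 2^{k_i-1}$, this looks like a constant-factor loss, not $1+\eps$.

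\textbf{Step 3: the main obstacle.} The naive phase doubling wastes up to a factor $2$ in the last phase, so the accounting must be sharper. I would try to argue that the unused space in the last phase is not $(1+\eps/3)2^{k_i}$ minus the points inserted. Instead, it is only the empty space that \Cref{lem:smallspace-unused-space-metric} guarantees, namely at most $(\eps/3)n_{i,j}$ at any time, with trailing unallocated cells not counted, since the array is only as long as its last occupied cell. More precisely, I would reinterpret allocation as lazy. The subarray of phase $j$ is laid out tree by tree, and each MBC tree is created only when needed. Occupied space is then the number of points plus the unused cells, and the unused cells are at most $(\eps/3)2^j$ inside the created trees, by the claim bounding empty space at any time. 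The one exception is the slack at the end of the last, partially filled tree. Summing $(\eps/3)2^j$ over $j\le k_i$ gives at most $(2\eps/3)2^{k_i}\le (4\eps/3)\cdot$(points in completed phases). This is still slightly too large. I expect to fix it by charging phase $j$'s waste to the $2^{j-1}$ points of phase $j-1$, or by shrinking the constant ($\eps/3$ was chosen with slack). The per-tree trailing slack, at most $2^H\le (\eps/3)2^j/\log n$ per phase, contributes lower order.

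\textbf{Step 4: across epochs.} Epochs are concatenated, so summing per-epoch bounds of the form (points) $+\ (2\eps/3)\cdot$(points) $+$ lower-order terms gives at most $(1+\eps)n$ in total. The additive constants from the initial $\log^2(1/\eps)/\eps$ consecutive points cost nothing, since those cells are all occupied. The delicate part is clearly Step 3, making the constant fit within $\eps$.
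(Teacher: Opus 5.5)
Your plan is the paper's argument, and it is correct once the accounting in Step~3 is redone. The paper applies \Cref{lem:smallspace-unused-space-metric} with $\eps/3$ and $n_{i,j}$ to each phase. It lays out phases and epochs lazily, each starting right after the space actually used, which is your reinterpretation in Step~3. It then uses that every phase before the last is full.

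The obstacle you report in Step~3 comes from a miscount. You compared $\sum_{j\le k_i}(\eps/3)2^j$ only with the $2^{k_i-1}$ points of the last completed phase, but all completed phases together hold $\sum_{j<k_i}2^j$ points. The paper splits the waste exactly as in your first proposed fix:
\begin{itemize}
\item each completed phase wastes at most $(\eps/3)$ times its own points;
\item the last phase wastes at most $(\eps/3)n_{i,k_i}\le(2\eps/3)n_{i,k_i-1}$, i.e., at most $(2\eps/3)$ times the points of phase $k_i-1$.
\end{itemize}
Together this is at most $\eps$ times the epoch's points, so no change of constants is needed.

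Also, the trailing slack of the last tree is not an exception. The insertion procedure marks the whole leftmost path below $v'$, so every marked internal node has a marked left child. Hence every empty cell of a created tree lies under the unmarked right child of a partial node, and is already counted in \Cref{lem:smallspace-unused-space-metric}.
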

\begin{proof}
By \Cref{lem:smallspace-unused-space-metric}, in phases $\le k_i-1$, the unused space is at most $\eps/3$ times the used space, and in phase $k_i$, it is at most $\eps/3$ times $n_{i, k_i}$. But, $n_{i, k_i} \le 2 n_{i, k_{i-1}}$, i.e., the unused space in phase $k_i$ is at most $2 \eps/3$ times the used space in this epoch. Thus, the total unused space is at most an $\eps$-fraction of the used space in this epoch. The lemma follows. 
\end{proof}
\begin{lemma}[Cost Bound]
    The cost incurred by the algorithm is at most $\opt\cdot O(\log ^3 n / \eps)$.
\end{lemma}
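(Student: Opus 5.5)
The plan is to decompose the algorithm's total cost into three parts: costs inside individual phases, costs at the boundaries between consecutive phases of an epoch, and costs at the boundaries between consecutive epochs (including the block of initial consecutive insertions). The inner part will be handled by \Cref{lem:cost-smallspace-metric} used as a black box. The boundary parts will be handled by counting boundaries and using the fact that every distance is at most the true $\opt$.

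\textbf{Estimates.} Throughout epoch $i$, the estimated cost (twice the MST on the points seen so far) stays at most $\opt_i$. Since twice the MST is at least the TSP cost of those points, the optimal tour of the points inserted during phase $j$ costs at most $\opt_i$. Conversely, the MST is at most the TSP cost, so every estimate satisfies $\opt_i \le 2\cdot 2\,\opt$ up to the doubling slack. More precisely, epoch $i$ ends only when $2\,\mathrm{MST}$ exceeds $\opt_i$. The last epoch has $\opt_{\text{last}} \le 2\cdot(2\,\mathrm{MST}) \le 4\,\opt$, assuming the new guess is set to twice the current estimate. Because the guesses at least double from one epoch to the next, $\sum_i \opt_i \le 2\,\opt_{\text{last}} = O(\opt)$.

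\textbf{Per-phase cost.} The black-box algorithm in phase $j$ of epoch $i$ runs with root radius $\opt_i$, which is a valid upper bound on the optimal tour of that phase's points. The proof of \Cref{lem:cost-smallspace-metric} only uses that $R$ is at least the optimal value: the pairwise-separation argument in \Cref{clm:allocation stops when the empty space is large} and the final distance bound both go through. So phase $j$ costs $O(\log^2 n_{i,j}/\eps)\cdot\opt_i = O(\log^2 n/\eps)\cdot \opt_i$. Each epoch has $k_i \le \log n + 1$ phases, since $n_{i,j}=2^j$. Summing, epoch $i$ costs $O(\log^3 n/\eps)\cdot\opt_i$. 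The initial $\log^2(1/\eps)/\eps$ consecutively inserted points cost at most that many times $\opt_i$, which is $O(\log^2 n/\eps)\,\opt_i$ since $1/\eps \le n$. The transitions between consecutive phases inside epoch $i$ number $O(\log n)$, each costing at most $\opt_i$. Summing over epochs with the geometric bound on $\sum_i\opt_i$ gives $O(\log^3 n/\eps)\cdot \opt$ for everything except the inter-epoch transitions.

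\textbf{Inter-epoch transitions (main obstacle).} Each epoch transition adds one edge of length at most $\opt$, and the difficulty is that the number of epochs is not obviously small. I would first bound it by $O(\log n)$, treating the first estimate as a constant fraction of the final one. If that is not available, I would charge the transition into epoch $i+1$ to $\opt_{i+1}$ instead: both endpoints lie among points seen by the end of that transition, whose MST is at most $\opt_{i+1}/2$ if the new guess is set to at least twice the current estimate. Each transition then costs at most $\opt_{i+1}$, and the geometric sum again yields $O(\opt)$. Combining all three parts gives the claimed $O(\log^3 n/\eps)\cdot\opt$.
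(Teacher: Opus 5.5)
Your proposal is correct and follows essentially the same route as the paper: bound each phase via the black-box \Cref{lem:cost-smallspace-metric} with root radius $\opt_i$, charge the $O(\log n)$ phase boundaries and the initial consecutive block to $\opt_i$, charge the transition into epoch $i+1$ to $\opt_{i+1}$, and sum using the geometric growth of the guesses, $\sum_i \opt_i = O(\opt)$. For the inter-epoch transitions, keep only your fallback argument, which is exactly the paper's: the number of epochs need not be $O(\log n)$, since the scale can more than double with every arriving point, giving up to $\Theta(n)$ epochs.
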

\begin{proof}
Fix epoch $i$. The cost between consecutive phases is at most \(\opt_i\), based on the current bound on the optimal solution. This results in a total additive cost of at most \(k_i \cdot \opt_i\) between the $k_i$ phases. In addition, since we insert the first $\frac{\log^2 (1/\eps)}{\eps}$ points sequentially, we incur an additional cost of at most $\frac{\log ^2 (1/\eps)}{\eps} \cdot \opt_i$. Finally, there is a cost of $O\left(\frac{\log^2 n_{i, j}}{\eps}\right)\cdot \opt_i = O(j^2/\eps)\cdot \opt_i$ within each phase by \Cref{lem:cost-smallspace-metric}.
Therefore, the total cost in the \(i\)-th epoch is at most
\[
\opt_i \cdot O\left(\frac{\log^2 (1/\eps)}{\eps} + k_i + \sum_{j=1}^{k_i} \frac{j^2}{\eps} \right)
= \opt_i \cdot O\left(\frac{\log^2 (1/\eps)}{\eps} + \frac{k_i^3}{\eps} \right). 
\]
In addition, the cost between epochs $i-1$ and $i$ is at most $\opt_i$.
Summing over all $\ell$ epochs, we get that the total cost is at most
\[
\sum_{i=1}^\ell \opt_i \cdot O\left(\frac{\log^2 (1/\eps)}{\eps} + \frac{k_i^3}{\eps} \right).
\]

Note that $\opt_i$ at least doubles in every epoch, and at any time it provides a $2$-approximation to the true value of $\opt$.
As a result, $\sum_{i=1}^\ell \opt_i \le 2 \opt_\ell \le 4\opt$. Moreover, the total number of phases in any epoch is at most $\log n$, i.e., $k_i \le \log n$ for every $i$. Applying these facts to the above cost bound, we get that the total cost is at most
\[
   \opt \cdot O\left(\frac{\log^2 (1/\eps)}{\eps} + \frac{\log^3 n}{\eps} \right)
    = \opt \cdot O(\log^3 n / \eps). \qedhere
\]
\end{proof}

%% file: arxiv/counterexample.tex
The upper bound analysis in \cite{AzarPV26} relies on the structural property that all partial nodes at any height $h$ are labeled by disjoint intervals. The adaptation of this key property into balls for a general metric is used in our paper as well in \Cref{cl:small-space-new}. However, the insertion algorithm from \cite{AzarPV26}, adapted to general metrics, does not maintain this key property.

First, we recall the insertion algorithm from \cite{AzarPV26}. The procedure $\ins(v, x)$ is defined as follows:
 \begin{itemize}
    \item[-] If $v$ is a leaf node and is unmarked, then label $v$ with the left or right half of its parent's label that contains $x$, write $x$ in the array cell at the leaf node $v$, and return success.
    \item[-] If $v$ is marked and $x$ is inadmissible at $v$, then return failure.
    \item[-] If $v$ is marked and $x$ is admissible at $v$, then call $\ins(v_\ell,x)$. If it returns success, return success. Otherwise, if it returns failure, then call $\ins(v_r,x)$. If this returns success, then return success. Otherwise, if it  returns failure as well, then return failure.
    \item[-] If $v$ is unmarked, then label $v$ with the left or right half of its parent's label that contains $x$ and call $\ins(v_\ell,x)$.
\end{itemize}
In short, $\ins(v, x)$ attempts to insert $x$ first to the left child of $v$ and then to the right child of $v$.
Note that this operation also applies for balls as labels instead of intervals.
As usual, the overall data structure comprises a series of identical \mbc trees $A_0, A_1, \ldots$.
Let $x_t$ be the point being inserted. The algorithm in \cite{AzarPV26} iterates over the subarrays corresponding to these trees in order, i.e., it attempts to insert $x_t$ in $A_0$, then $A_1$, then $A_2$, etc., until the point is successfully inserted.

We now demonstrate that the distance property claimed in \cref{cl:small-space-new} fails for this insertion algorithm for $\mathbb{R}^2$ with the $\ell_1$ metric.

The instance starts by inserting $(1/2,0), (0,1/2)$.
As shown in \Cref{fig:counterexample}, these two points mark two sibling nodes (call these $u, v$) in the \mbc tree with balls of radius $1/2$ and centered at $(1/2,0),(0,1/2)$ respectively. 
Next, we insert the points $(\delta,0), (0,\delta)$ for a small $\delta > 0$.
Observe that $\|(\delta,0) - (1/2,0)\|_1 = |1/2 - \delta| < 1/2$, whereas $\|(\delta,0) - (0,1/2)\|_1 > 1/2$. Thus, $(\delta, 0)$ is inserted under $u$ and creates a partial node (call it $u'$) labeled by a ball centered at $(\delta,0)$. Similarly, $(0,\delta)$ is inserted under $v$ and creates a partial node (call it $v'$) labeled by a ball centered at $(0,\delta)$. The two partial nodes $u', v'$ at the same height contradict \Cref{cl:small-space-new} since the points $(\delta,0), (0,\delta)$ are only at distance $2\delta$ from each other.

\begin{figure}[tbh]
    \centering
    \subfloat[\centering After the insertion of $(1/2,0)$ and $(0,1/2)$]{{\includegraphics[width=.49\linewidth]{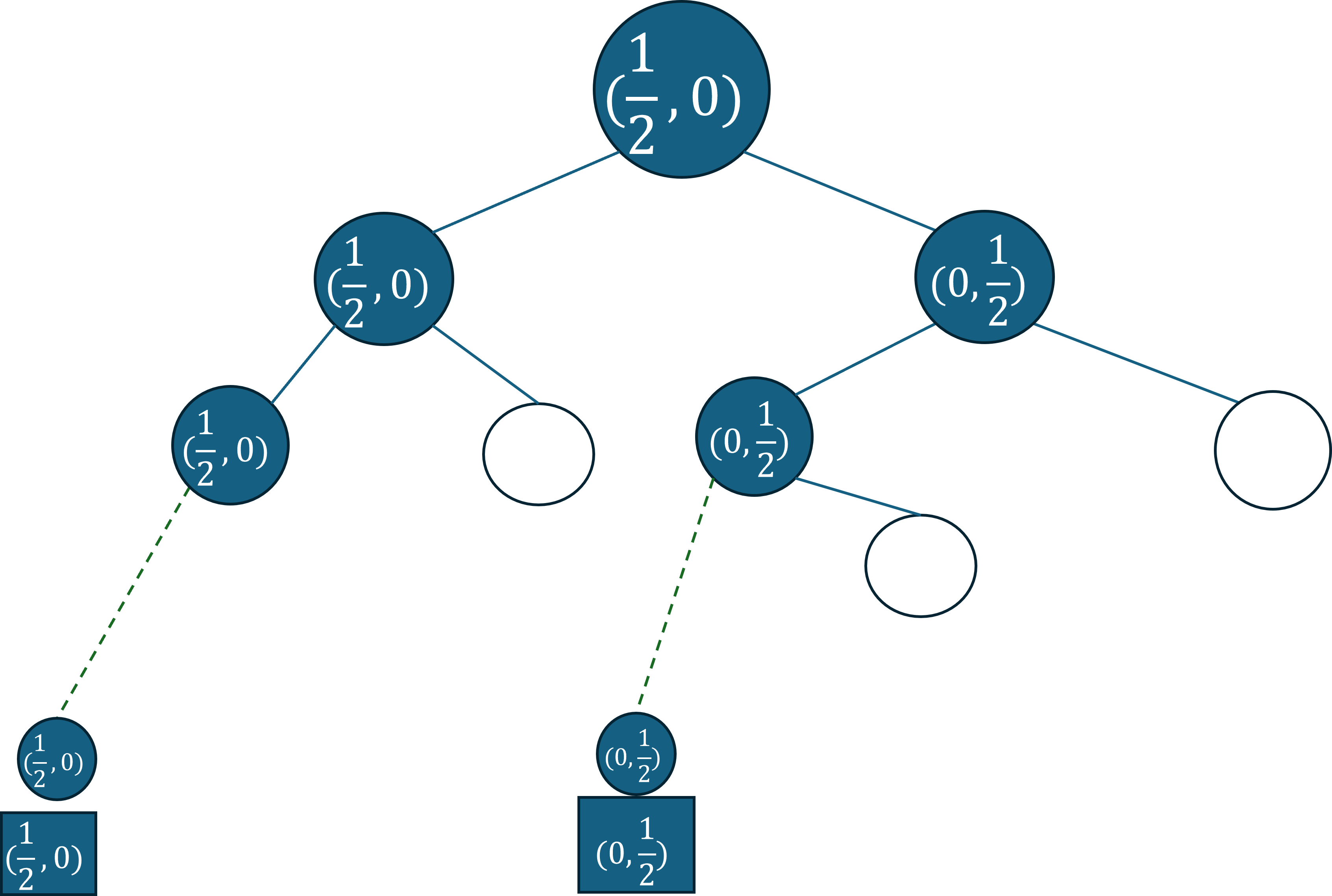} }}%
    \hfill
    \subfloat[\centering After the insertion of $(\delta,0)$ and $(0,\delta)$]{{\includegraphics[width=.49\linewidth]{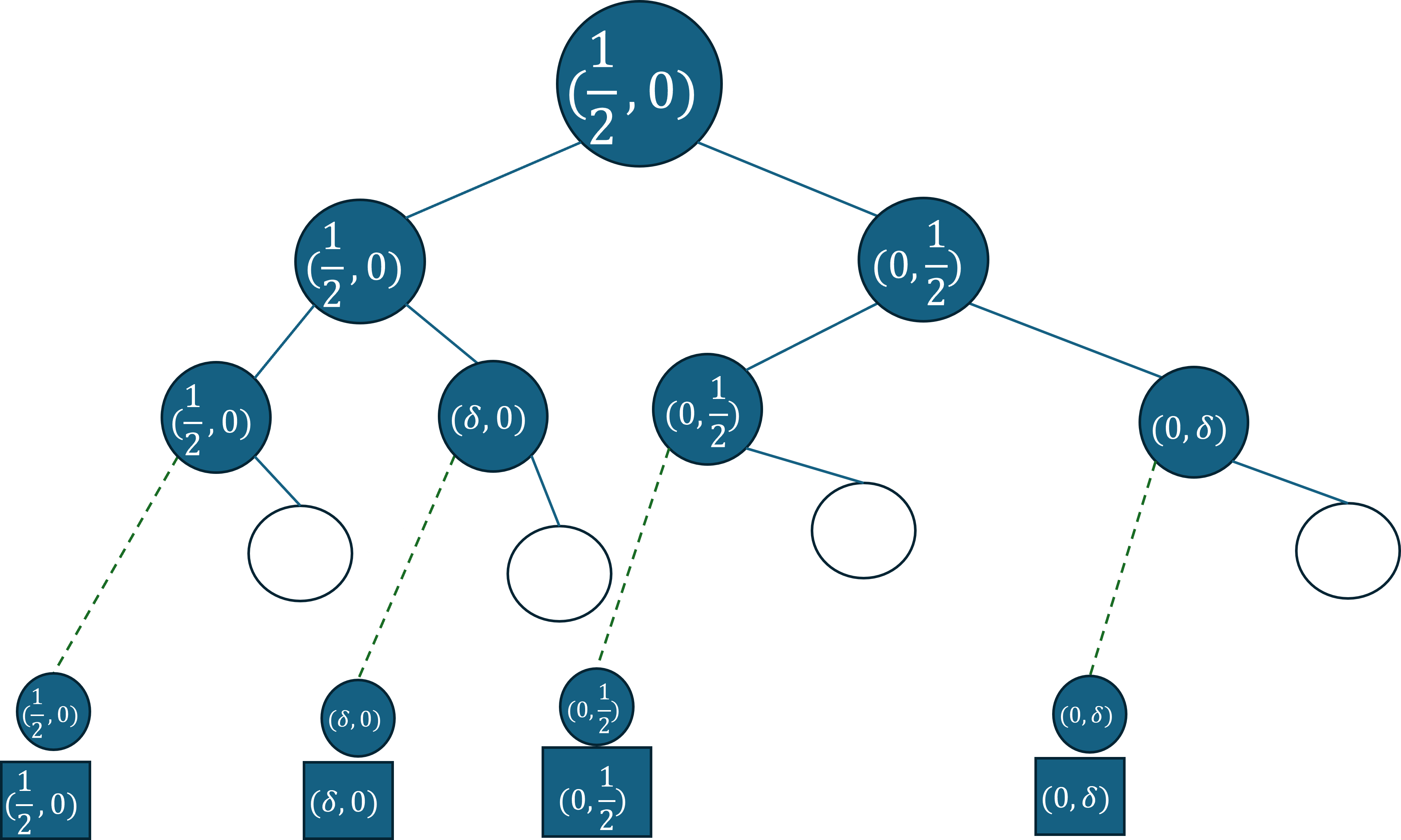} }}%
    \caption{Example showing that the algorithm in \cite{AzarPV26} does not satisfy \Cref{cl:small-space-new} for the $\ell_1$-metric on $\mathbb{R}^2$. On the left, after the initial insertion of $(1/2, 0)$ and $(0, 1/2)$, we have sibling nodes $u, v$ with disjoint balls of radius $1/2$. On the right, after the subsequent insertion of $(\delta, 0)$ and $(0, \delta)$, we create partial nodes $u', v'$ at the same height, but their centers are arbitrarily close, within $2\delta$ of each other. This violates \Cref{cl:small-space-new}.}
    \label{fig:counterexample}
\end{figure}